\newcommand*{\QEDA}
{\hfill\ensuremath{\blacksquare}}
\newtheorem{theo}{Theorem}
\newtheorem{theorem}{Theorem}
\newtheorem{definition}[theorem]{Definition}
\newtheorem{lemma}{Lemma}
\newtheorem{remark}{Remark}
\newtheorem{example}{Example}
\begin{document}

\title{Unshared Secret Key Cryptography}
\pubid{}
\specialpapernotice{}
\author{Shuiyin~Liu,~Yi~Hong,~and~Emanuele~Viterbo\thanks{%
S.~Liu, Y.~Hong and E.~Viterbo are with the Department of Electrical and
Computer Systems Engineering, Monash University, Clayton, VIC 3800,
Australia (e-mail: \{shuiyin.liu, yi.hong, emanuele.viterbo\}@monash.edu).
This work was performed at the Monash Software Defined Telecommunications
Lab and the authors were supported by the Monash Professorial Fellowship,
2013 Monash Faculty of Engineering Seed Funding Scheme, and the Australian
Research Council Discovery Project with ARC DP130100336.}}
\maketitle

\begin{abstract}
Current security techniques can be implemented with either secret key
exchange or physical layer wiretap codes. In this work, we investigate an
alternative solution for MIMO wiretap channels. Inspired by the \emph{%
artificial noise} (AN) technique, we propose the unshared secret key (USK)
cryptosystem, where the AN is redesigned as a one-time pad secret key
aligned within the null space between transmitter and legitimate receiver.
The proposed USK cryptosystem is a new physical layer cryptographic scheme,
obtained by combining traditional network layer cryptography and physical
layer security. Unlike previously studied artificial noise techniques,
rather than ensuring non-zero secrecy capacity, the USK is valid for an
infinite lattice input alphabet and guarantees Shannon's \emph{ideal secrecy}
and \emph{perfect secrecy}, without the need of secret key exchange. We then
show how ideal secrecy can be obtained for finite lattice constellations
with an arbitrarily small outage.
\end{abstract}

\begin{IEEEkeywords}
perfect secrecy, ideal secrecy, secret key, physical layer security,
MIMO wiretap channel.
\end{IEEEkeywords}

\section{Introduction}

The broadcast characteristics of wireless communication systems are
struggling to provide security and privacy. Research on secure communication
falls into two categories: network layer cryptography and physical layer
security. The former assumes that the physical layer provides error-free
data links, in which security depends on encryption. In the latter, the
strategy is to use the characteristics of wireless channels to protect the
secret data from eavesdropping without the need of encryption. Despite the
differences between these categories, both are rooted in Shannon's \emph{%
perfect secrecy} \cite{Shannon46}, which is defined as the mutual
information $I(\mathbf{u}$; $\mathbf{y)}=0$; that is, the secret message $%
\mathbf{u}$ and the eavesdropper's received message $\mathbf{y}$ are
mutually independent. Perfect secrecy requires one-time pad secret key $%
\mathbf{v}$ \cite{Shannon46}. A weaker version of perfect secrecy is \emph{%
ideal secrecy} \cite{Shannon46}, in which no matter how much of $\mathbf{y}$
is intercepted, there is no unique solution of $\mathbf{u}$ and $\mathbf{v}$
but many solutions of comparable probability.

Wyner \cite{Wyner75} introduced physical layer security by replacing
the shared secret key in Shannon's model with channel noise,
achieving \emph{
weak secrecy} ($\lim\limits_{n\rightarrow \infty }\frac{1}{n}I(\mathbf{u}$; $%
\mathbf{y)}=0$) through channel coding as the codeword length $n$ goes to
infinity. Csisz\'{a}r subsequently proposed \emph{strong secrecy} \cite%
{Csiszar96} based upon $\lim\limits_{n\rightarrow \infty }I(\mathbf{u}$; $%
\mathbf{y)}=0$, which further reduced information leakage. These pioneering
results require that the intended receiver has a better channel than the
eavesdropper, leading to a long line of research that relies on noise or
fading to degrade the eavesdropper's channel. Here, the \emph{secrecy
capacity} is defined as a measure of the transmission rate, below which the
eavesdropper can recover no information \cite{Hellman78}. For Gaussian
wiretap channels with a helping interferer, Tang \emph{et al}. \cite{poor08}
studied achievable secrecy rate and secrecy capacity. For wireless fading
channels and multiple-input multiple-output (MIMO) wiretap channels, Gopala
\emph{et al}. \cite{Gopala08} and Liu \emph{et al}. \cite{Shamai09} derived
secrecy capacities. In the context of wiretap code design, polar codes
achieving strong secrecy over discrete memoryless channels have been
proposed in \cite{Vardy11}. For Gaussian wiretap channels, nested lattice
codes achieving strong secrecy were proposed in \cite{cong12IT}. The impact
of finite code length and finite constellations on Eve's equivocation rate
was studied in \cite{CWW11}. Physical layer security schemes, in general,
require an infinite-length wiretap code to approach the secrecy capacity;
this limits the applicability of these schemes to practical communication
systems.

In contrast to physical layer security, traditional cryptographic techniques
can protect the secret message, even when secrecy capacity is zero. Its aim
is to achieve \emph{semantic secrecy} \cite{Goldwasser82}, so that it is
physically infeasible to extract any information about $\mathbf{u}$ due to
the very high computational complexity involved. The most widely used
cryptographic technology is public-key cryptography \cite{Hellman76}, which
requires two separate keys: a public key that encrypts the plaintext and a
secret key that decrypts the ciphertext. An example is the NTRU cryptosystem
\cite{Silverman98}, where the secret key is based on a short vector of a
convolutional modular lattice, $\Lambda $, and the public key corresponds to
the Hermite normal form basis of $\Lambda $ \cite{Micciancio01hnf}. For
wiretap channels, public-key cryptography has been extensively studied in
\cite{Csiszar93,Bennett95}, which focus on issues of secret-key generation
and distribution problems. Bloch \emph{et al}. \cite{Bloch08} showed how to
implement secret-key agreement using low-density parity-check (LDPC) codes.
In \cite{Falahati08}, turbo codes are introduced to speed up the encryption
and decryption processes of the advanced encryption standard (AES)
cryptosystems. Although traditional cryptographic techniques can be applied
independently to communication channels, the exchange of secret keys between
transmitter and intended receiver is required. A significant challenge is to
reduce the risk of key disclosure during its distribution.

Despite the similarities between cryptography and physical layer security,
and the potential for major advances in cryptography through combining their
advantages, the theoretical connections between them have not yet been
investigated. One direction has been to add controlled interference at the
eavesdropper side -- that is, to jam the eavesdropper at the physical layer.
This idea extends previous studies that were limited to the assumptions on
the eavesdropper's channel noise. In the literature, it is commonly assumed
that the transmitted message and jamming signal follow a multivariate
Gaussian distribution. The standard strategy of existing jamming techniques,
such as artificial noise (AN) \cite{Goel08} and the cooperative jammer \cite%
{Fakoorian11}, is to ensure theoretical non-zero secrecy capacity. In \cite%
{Liu13Letter}, we proposed a variant of AN using a finite \textit{M}-QAM
alphabet, called \emph{practical secrecy} (PS) scheme, where, instead of
increasing the secrecy rate with AN, the eavesdropper's error probability is
maximized.

In this work, we analyze the security of the PS scheme from an information
theoretical perspective. This theoretical advance shows that the PS scheme
is \emph{de facto} an \emph{un}shared secret key (USK) cryptosystem, where
AN serves as an unshared one-time pad secret key. The result is a
development of our understanding of the benefits of AN, with a cryptographic
perspective. We show that the USK provides Shannon's ideal secrecy, with no
secret key exchange, under Goel \emph{et al.}'s assumptions on the physical
channels that enable the use of the AN scheme.

Our work differs from previous studies of AN \cite{Goel08,Khisti10SO},
because it puts forward four new aspects that were not previously accounted
for:

\begin{enumerate}
\item \emph{Shannon's secrecy}: we aim at achieving Shannon's ideal secrecy
and perfect secrecy, rather than ensuring non-zero secrecy capacity. We show
that perfect secrecy is achieved in the high-power AN limit.

\item \emph{Finite alphabet based on QAM signaling}: with practical
perspective, we use finite input alphabets rather than the Gaussian input.

\item \emph{Artificial noise}: we have no special requirement of the
distribution of AN; that is, not necessarily Gaussian.

\item \emph{Secrecy outage}: we show that Shannon's ideal secrecy can be
achieved for finite signal constellations with an arbitrarily small outage
probability.
\end{enumerate}

Section II presents the system model. Section III and IV describe
the USK cryptosystem with infinite lattice constellations. Section V
and VI analyze the USK cryptosystem with finite lattice
constellations. Section VII provides a discussion on open questions.
Section VIII sets out the theoretical and applied conclusions. The
Appendix contains the proofs of the theorems.

\textit{Notation:} Matrices and column vectors are denoted by upper and
lowercase boldface letters, and the Hermitian transpose, inverse,
pseudo-inverse of a matrix $\mathbf{B}$ by $\mathbf{B}^{H}$, $\mathbf{B}%
^{-1} $, and $\mathbf{B}^{\dagger }$, respectively. The inner product in the
Euclidean space between vectors $\mathbf{u}$ and $\mathbf{v}$ is defined as $%
\langle \mathbf{u},\mathbf{v}\rangle =\mathbf{u}^{T}\mathbf{v}$, and the
Euclidean length $\Vert \mathbf{u}\Vert =\sqrt{\langle \mathbf{u},\mathbf{u}%
\rangle }$. The Frobenius norm of matrix $\mathbf{A}$ is denoted by $%
\left\Vert \mathbf{A}\right\Vert _{F}$. Let $\left\{ X_{n},X\right\} $ be
defined on the same probability space. We write $X_{n}\overset{a.s.}{%
\rightarrow }X$ if $X_{n}$ converges to $X$ almost surely or with
probability one.

We use the standard asymptotic notation $f\left( x\right) =O\left( g\left(
x\right) \right) $ when $\lim \sup\limits_{x\rightarrow \infty
}|f(x)/g(x)|<\infty $. $\mathbf{0}_{m\times n}$ denotes an $m\times n$ null
matrix. $\mathbf{I}_{n}$ denotes the identity matrix of size $n$. We write $%
\triangleq $ for equality in definition. \textrm{vol}$(S)$ denotes the
Euclidean volume of $S$. The cardinality of a set $A$ is defined as $%
\left\vert A\right\vert $.

A circularly symmetric complex Gaussian random variable $x$ with variance $%
\sigma ^{2}$\ is defined as $x\backsim \mathcal{N}_{\mathbb{C}}(0,\sigma
^{2})$. A Chi-squared distributed random variable $x$ with $k$ degrees of
freedom is defined as $x\backsim \mathcal{X}^{2}(k)$. The gamma function is
represented by $\Gamma (x)$. The real, complex, integer and complex integer
numbers are denoted by $\mathbb{R}$, $\mathbb{C}$, $\mathbb{Z}$, and $%
\mathbb{Z}\left[ i\right] $, respectively. E$(x)$ and Var$(x)$ represent the
mean and variance of the random variable $x$. $\Re (\cdot )$ and $\Im (\cdot
)$ represent real and imaginary parts of a complex number. $H(\cdot )$, $%
H(\cdot |\cdot )$ and $I(\cdot )$ represent entropy, conditional entropy and
mutual information, respectively.

\section{System Model}

The MIMO wiretap system model is given as follows. The number of antennas at
the transmitter (Alice), the intended receiver (Bob), and the passive
eavesdropper (Eve) are denoted by $N_{\text{A}}$, $N_{\text{B}}$, and $N_{%
\text{E}}$, respectively. Alice would like to communicate with Bob with
arbitrarily low probability of error, while maintaining privacy and
confidentiality. Alice transmits the information signal $\mathbf{x}$, and
Bob and Eve receive $\mathbf{z}$ and $\mathbf{y}$, respectively, given by%
\begin{eqnarray}
\mathbf{z} &=&\mathbf{H}\mathbf{x}+\mathbf{n}_{\text{B}}\text{,}
\label{secrecy_signal} \\
\mathbf{y} &=&\mathbf{G}\mathbf{x}+\mathbf{n}_{\text{E}}\text{,}
\label{Eve_signal}
\end{eqnarray}%
where $\mathbf{H}\in \mathbb{C}^{N_{\text{B}}\times N_{\text{A}}}$ and $%
\mathbf{G}\in \mathbb{C}^{N_{\text{E}}\times N_{\text{A}}}$ are the channel
matrices of Bob and Eve. We assume that all the channel matrix elements are
i.i.d. $\mathcal{N}_{\mathbb{C}}(0$, $1)$ random variables (i.e., Bob and
Eve are not co-located). We assume that the noise vectors $\mathbf{n}_{\text{%
B}}$ and $\mathbf{n}_{\text{E}}$ have i.i.d. $\mathcal{N}_{\mathbb{C}}(0$, $%
\sigma _{\text{B}}^{2})$ and $\mathcal{N}_{\mathbb{C}}(0$, $\sigma _{\text{E}%
}^{2})$ components, respectively.

In this work, we assume that

\begin{enumerate}
\item Alice knows the realization of $\mathbf{H}$.

\item Alice only knows the statistics of $\mathbf{G}$, which varies in each
transmission.

\item Eve knows the realizations of $\mathbf{H}$ and $\mathbf{G}$.
\end{enumerate}

No assumption is needed about the statistics of $\mathbf{H}$ during
transmission, since its realization is known to Alice and Eve.

Our secure transmission strategy is based on the artificial noise scheme
\cite{Goel08} and the practical secrecy scheme \cite{Liu13Letter}, which are
summarized below.

\subsection{Artificial Noise Scheme}

In the AN scheme \cite{Goel08}, $N_{\text{B}}$ is assumed to be smaller than
$N_{\text{A}}$, thus $\mathbf{H}$ has a non-trivial null space with an
orthonormal basis given by columns of the matrix $\mathbf{Z}=\mbox{null}(%
\mathbf{H})\in \mathbb{C}^{N_{\text{A}}\times (N_{\text{A}}-N_{\text{B}})}$,
i.e.,%
\begin{equation}
\mathbf{HZ=0}_{N_{\text{B}}\times N_{\text{B}}}\text{.}
\end{equation}%
Let $\mathbf{u}\in \mathbb{C}^{N_{\text{B}}\times 1}$ be the transmitted
vector carrying the information, and let $\mathbf{v}\in \mathbb{C}^{(N_{%
\text{A}}-N_{\text{B}})\times 1}$ represent the \textquotedblleft artificial
noise\textquotedblright\ generated by Alice but is unknown to Bob and Eve.

Alice performs \emph{SVD precoding} and transmits%
\begin{equation}
\mathbf{x=V}\left[
\begin{array}{c}
\mathbf{{{\mathbf{u}}}} \\
\mathbf{v}%
\end{array}%
\right] =\mathbf{\mathbf{V}}_{1}\mathbf{{{\mathbf{u}}}+Zv}\text{,}
\label{T1}
\end{equation}%
where the columns of $\mathbf{V=[V}_{1}$, $\mathbf{Z]}$ are the
right-singular vectors of $\mathbf{H}$ (i.e., $\mathbf{H}=\mathbf{U}\mathbf{%
\Lambda }\mathbf{V}^{H}$, where $\mathbf{U}\in \mathbb{C}^{N_{\text{B}%
}\times N_{\text{B}}}$, $\mathbf{\Lambda }\in \mathbb{C}^{N_{\text{B}}\times
N_{\text{A}}}$, $\mathbf{V}\in \mathbb{C}^{N_{\text{A}}\times N_{\text{A}}}$%
, $\mathbf{U}^{H}\mathbf{U}=\mathbf{I}_{N_{\text{B}}}$, $\mathbf{V}^{H}%
\mathbf{V}=\mathbf{I}_{N_{\text{A}}}$).

Due to the orthogonality between $\mathbf{\mathbf{\mathbf{V}}}_{1}$ and $%
\mathbf{Z}$, the total transmission power $||\mathbf{x}||^{2}$ can be
written as%
\begin{equation}
||\mathbf{x}||^{2}=||\mathbf{u}||^{2}+||\mathbf{v}||^{2}\text{.}
\label{Total_power}
\end{equation}

Alice has an average transmit power constraint $P$,%
\begin{equation}
P\geq \text{E}(||\mathbf{x}||^{2})=\text{E}(||\mathbf{u}||^{2})+\text{E}(||%
\mathbf{v}||^{2})\text{.}
\end{equation}

The AN scheme in \cite{Goel08} is based on the assumptions below:

\begin{enumerate}
\item $\mathbf{u}$ and $\mathbf{v}$ are assumed to be Gaussian random
vectors.

\item $N_{\text{A}}>N_{\text{B}}$, $N_{\text{A}}>N_{\text{E}}$ and $N_{\text{%
E}}\geq N_{\text{B}}$
\end{enumerate}

The condition $N_{\text{E}}\geq N_{\text{B}}$ guarantees that Eve has at
least the same number of degree of freedom as Bob. This puts Eve in the
position of not losing \emph{a-priori} any information that Bob could
receive.

Equations (\ref{secrecy_signal}) and (\ref{Eve_signal}) can then be
rewritten as%
\begin{align}
\mathbf{z}& =\mathbf{H\mathbf{\mathbf{V}}}_{1}\mathbf{u}+\mathbf{n}_{\text{B}%
}\text{,}  \label{sec_mod2} \\
\mathbf{y}& =\mathbf{G\mathbf{\mathbf{V}}}_{1}\mathbf{u}+\mathbf{G}\mathbf{Z}%
\mathbf{v}+\mathbf{n}_{\text{E}}\text{.}  \label{Eve_mod2}
\end{align}%
and show that $\mathbf{v}$ only degrades Eve's reception, but not Bob's.

The purpose of the AN scheme is to degrade Eve's channel, so that the
secrecy capacity is positive \cite{Goel08}. Like other wiretap schemes, to
achieve the secrecy capacity, explicit wiretap codes are required. A \emph{%
strong secrecy rate} $R$ is achievable if there exist a sequence of wiretap
codes $\left\{ \mathcal{C}_{n}\right\} $ of increasing length $n$ and rate $%
R $, such that both Bob's error probability and the amount of information
obtained by Eve approach zero when $n\rightarrow \infty $ \cite%
{Csiszar96,cong12IT}, i.e.,%
\begin{align*}
& \lim_{n\rightarrow \infty }\Pr \left\{ \mathbf{\hat{u}\neq u}\right\} =0%
\text{,}  \tag{reliability} \\
& \lim_{n\rightarrow \infty }I(\mathbf{u};\mathbf{y)}=0\text{,}
\tag{strong
secrecy}
\end{align*}%
where $\mathbf{\hat{u}}$ represents Bob's estimation of $\mathbf{u}$.

\subsection{Practical Secrecy Scheme}

Rather than attempting to increase secrecy rate, in \cite{Liu13Letter}, we
proposed a variant of the AN scheme, named \emph{practical secrecy} (PS)
scheme, where Eve's error probability is maximized. Although the
transmission model is the same as that of AN, the most important difference
lies in the distributions of $\mathbf{u}$ and $\mathbf{v}$:

\begin{enumerate}
\item $M$-QAM transmitted symbols: $\mathbf{u}\in \mathcal{Q}^{N_{\text{B}}}$
with uniform distribution, where $\Re (\mathcal{Q}\mathbf{)}=\Im (\mathcal{Q}%
\mathbf{)}=\{-\sqrt{M}+1$, $-\sqrt{M}+3$, ..., $\sqrt{M}-1\}$.

\item There is no requirement on the distribution of $\mathbf{v}$.
\end{enumerate}

Different from the AN scheme, where the achievability of security is based
on an infinite-length wiretap code, the PS scheme \cite{Liu13Letter} is
designed for practical communication systems, that make use of finite input
alphabets based on $M$-QAM transmitted symbols. The aim is to ensure that
Eve's block error probability approaches one with minimum distance decoding,
(e.g. sphere decoder), rather than strong secrecy. However, this security
criterion is not satisfactory from an information-theoretic security
viewpoint, as it may not ensure security for all information bits within a
message.

\subsection{Proposed AN Scheme}

Different from the original AN scheme \cite{Goel08}, in this work, we set a
peak AN power constraint,%
\begin{equation}
P_{\text{v}}\geq ||\mathbf{v}||^{2}\text{.}
\end{equation}%
This peak power constraint is essential to prove the secrecy of USK, as
detailed in Section III-A.

Moreover, we consider two lattice constellation models:

\begin{enumerate}
\item Infinite constellations with average power constraint

\item Finite constellations with peak power constraint
\end{enumerate}

We focus on information theoretic security, hence, our analysis will focus
on Eve's equivocation $H(\mathbf{{{\mathbf{u}}}|y})$.

Throughout the paper, we consider the \emph{worst-case} scenario (for Alice)
that Eve's channel is noiseless, i.e.,%
\begin{equation}
\mathbf{y}=\mathbf{G\mathbf{\mathbf{V}}}_{1}\mathbf{u}+\mathbf{G}\mathbf{Z}%
\mathbf{v}\text{.}  \label{WC_S}
\end{equation}%
Using \emph{Data Processing Inequality}, it is simple to show that Eve's
channel noise can only increase her equivocation:%
\begin{equation}
H(\mathbf{{{\mathbf{u}}}|G\mathbf{\mathbf{V}}}_{1}\mathbf{u}+\mathbf{G}%
\mathbf{Z}\mathbf{v})\leq H(\mathbf{{{\mathbf{u}}}|G\mathbf{\mathbf{V}}}_{1}%
\mathbf{u}+\mathbf{G}\mathbf{Z}\mathbf{v+n}_{\text{E}})\text{.}
\end{equation}

We further consider the \emph{worst-case} scenario (for Alice) that Eve's
antenna array elements are uncorrelated, i.e., the columns of $\mathbf{G}$
are zero-mean independent complex Gaussian vectors with an identity
covariance matrix.

For a general complex Gaussian random matrix $\mathbf{\hat{G}}$ with an
arbitrary non-singular covariance matrix $\Sigma $ (which is the covariance
matrix\ of Eve's antenna array), we can write%
\begin{equation}
\mathbf{\hat{G}}=\Sigma ^{1/2}\mathbf{G}\text{.}
\end{equation}%
Using \emph{Data Processing Inequality}, it is simple to show that Eve's
antenna correlation can only increase her equivocation:%
\begin{equation}
H(\mathbf{{{\mathbf{u}}}|G\mathbf{\mathbf{V}}}_{1}\mathbf{u}+\mathbf{G}%
\mathbf{Z}\mathbf{v})\leq H(\mathbf{{{\mathbf{u}}}|}\Sigma ^{1/2}\mathbf{G%
\mathbf{\mathbf{V}}}_{1}\mathbf{u}+\Sigma ^{1/2}\mathbf{G}\mathbf{Z}\mathbf{v%
})\text{.}
\end{equation}

\begin{remark}
Throughout this paper, the proposed security analysis of USK scheme is valid
for a complex Gaussian random matrix $\mathbf{G}$ with an arbitrary
non-singular covariance matrix $\Sigma $. The extension to USK of other
distributed random matrix $\mathbf{G}$ will be studied in our future work.
\end{remark}

\subsection{Shannon's Secrecy}

We consider a cryptosystem where a sequence of $K$ messages $\left\{ \mathbf{%
m}_{i}\right\} _{1}^{K}$ are enciphered into the cryptograms $\left\{
\mathbf{y}_{i}\right\} _{1}^{K}$ using a sequence of secret keys $\left\{
k_{i}\right\} _{1}^{K}$. We recall from \cite{Shannon46} the definition of
Shannon's ideal secrecy and perfect secrecy.

\begin{definition}
\label{Def_IS copy(1)}A secrecy system is \emph{ideal} when%
\begin{eqnarray}
\lim\limits_{K\rightarrow \infty }H(\left\{ \mathbf{m}_{i}\right\} _{1}^{K}%
\mathbf{|}\left\{ \mathbf{y}_{i}\right\} _{1}^{K}) &\neq &0\text{,}  \notag
\\
\lim\limits_{K\rightarrow \infty }H(\left\{ k_{i}\right\} _{1}^{K}\mathbf{|}%
\left\{ \mathbf{y}_{i}\right\} _{1}^{K}) &\neq &0\text{.}
\end{eqnarray}
\end{definition}

Shannon explained the concept of ideal secrecy in \cite{Shannon46} as:
\textquotedblleft No matter how much material is intercepted, there is not a
unique solution but many of comparable probability.\textquotedblright\ It
was discussed in \cite{Geffe65} how a system achieving ideal secrecy is
indeed unbreakable.

\begin{definition}
\label{Def_PS copy(1)}A secrecy system is \emph{perfect} when%
\begin{equation}
H(\left\{ \mathbf{m}_{i}\right\} _{1}^{K}\mathbf{|}\left\{ \mathbf{y}%
_{i}\right\} _{1}^{K})=H(\left\{ \mathbf{m}_{i}\right\} _{1}^{K})\text{.}
\end{equation}
\end{definition}

In the special case that $\left\{ \mathbf{m}_{i}\right\} _{1}^{K}$ and $%
\left\{ k_{i}\right\} _{1}^{K}$ are mutually independent, using the entropy
chain rule, we have%
\begin{equation}
H(\left\{ \mathbf{m}_{i}\right\} _{1}^{K})=\sum_{i=1}^{K}H(\mathbf{m}_{i})
\label{HM}
\end{equation}%
\begin{equation}
H(\left\{ \mathbf{m}_{i}\right\} _{1}^{K}\mathbf{|}\left\{ \mathbf{y}%
_{i}\right\} _{1}^{K})=\sum_{i=1}^{K}H(\mathbf{m}_{i}\mathbf{|y}_{i})\text{,}
\label{Ideal_m1}
\end{equation}%
\begin{equation}
H(\left\{ k_{i}\right\} _{1}^{K}\mathbf{|}\left\{ \mathbf{y}_{i}\right\}
_{1}^{K})=\sum_{i=1}^{K}H(k_{i}\mathbf{|y}_{i})\text{.}  \label{Ideal_m2}
\end{equation}

From (\ref{Ideal_m1}) and (\ref{Ideal_m2}), ideal secrecy is achieved if $H(%
\mathbf{m}_{i}\mathbf{|y}_{i})\neq 0$ and $H(k_{i}\mathbf{|y}_{i})\neq 0$
for one of any $i$. To protect all the messages, in this work, we use a
slightly stronger condition as our design criterion for ideal secrecy, given
by

\begin{definition}
\label{Def_S_IS}If $\left\{ \mathbf{m}_{i}\right\} _{1}^{K}$ and $\left\{
k_{i}\right\} _{1}^{K}$ are mutually independent, a secrecy system is \emph{%
ideal} when%
\begin{equation}
H(\mathbf{m}_{i}\mathbf{|y}_{i})\neq 0\text{ and }H(k_{i}\mathbf{|y}%
_{i})\neq 0\text{, for all }i\text{.}  \label{SE1}
\end{equation}
\end{definition}

From (\ref{HM}) and (\ref{Ideal_m1}), perfect secrecy is achieved when%
\begin{equation}
H(\mathbf{m}_{i}\mathbf{|y}_{i})=H(\mathbf{m}_{i})\text{, for all }i\text{.}
\label{SE2}
\end{equation}

An overview of measures on information-theoretic security can be found in
\cite{Liu_inv}.

\subsection{Lattice Preliminaries}

To describe our scheme, it is convenient to introduce some lattice
preliminaries. An $n$-dimensional \emph{complex lattice} $\Lambda _{\mathbb{C%
}}$ in a complex space $\mathbb{C}^{m}$ ($n\leq m$) is the discrete set
defined by:%
\begin{equation*}
\Lambda _{\mathbb{C}}=\left\{ \mathbf{Bu}\text{: }\mathbf{u\in }\text{ }%
\mathbb{Z}\left[ i\right] ^{n}\right\} \text{,}
\end{equation*}%
where the \emph{basis} matrix $\mathbf{B=}\left[ \mathbf{b}_{1}\cdots
\mathbf{b}_{n}\right] $ has linearly independent columns.

$\Lambda _{\mathbb{C}}$ can also be easily represented as $2n$-dimensional
real lattice $\Lambda _{\mathbb{R}}$ \cite{BK:Conway93}. In what follows, we
introduce some lattice parameters of $\Lambda _{\mathbb{C}}$, which have a
corresponding value for $\Lambda _{\mathbb{R}}$. The \emph{Voronoi region}
of $\Lambda _{\mathbb{C}}$, defined by%
\begin{equation*}
\mathcal{V}_{i}\left( \Lambda _{\mathbb{C}}\right) =\left\{ \mathbf{y}\in
\mathbb{C}^{m}\text{: }\Vert \mathbf{y}-\mathbf{x}_{i}\Vert \leq \Vert
\mathbf{y}-\mathbf{x}_{j}\Vert ,\forall \text{ }\mathbf{x}_{i}\neq \mathbf{x}%
_{j}\right\} ,
\end{equation*}%
gives the nearest neighbor decoding region of lattice point $\mathbf{x}_{i}$.

The volume of any $\mathcal{V}_{i}\left( \Lambda _{\mathbb{C}}\right) $,
defined as vol$(\Lambda _{\mathbb{C}})\triangleq |\det (\mathbf{B}^{H}%
\mathbf{B})|$, is equivalent to the volume of the corresponding real lattice.

The \emph{effective radius} of $\Lambda _{\mathbb{C}}$, denoted by $r_{\text{%
eff}}(\Lambda _{\mathbb{C}})$, is the radius of a sphere$\,$of volume vol$%
(\Lambda _{\mathbb{C}})$ \cite{Zamir08}. For large $n$, it is approximately%
\begin{equation}
r_{\text{eff}}(\Lambda _{\mathbb{C}})\approx \sqrt{n/(\pi e)}\text{vol}%
(\Lambda _{\mathbb{C}})^{\frac{1}{2n}}.  \label{r_eff}
\end{equation}

\section{Unshared Secret Key Cryptosystem With Infinite Constellations}

In this section, we consider the system model with an infinite lattice
constellations, satisfying the average transmit power constraint. This
provides the theoretical basis for unshared secret key cryptosystems.

\subsection{Encryption}

We consider a sequence of $K$ mutually independent messages $\left\{ \mathbf{%
m}_{i}\right\} _{1}^{K}$, where each one is mapped to a transmitted vector $%
\mathbf{{{\mathbf{u}}}}\in {\mathbb{Z}\left[ i\right] ^{N_{\text{B}}}}$. The
probability distribution of $\mathbf{{{\mathbf{u}}}}$ can be arbitrary, but
has finite E$(||\mathbf{{{\mathbf{u}}}}||^{2})$. To secure the $K$
transmitted vectors $\left\{ \mathbf{u}_{i}\right\} _{1}^{K}$, Alice
enciphers $\left\{ \mathbf{u}_{i}\right\} _{1}^{K}$ into the cryptograms $%
\left\{ \mathbf{y}_{i}\right\} _{1}^{K}$ using a sequence of mutually
independent secret keys $\left\{ \mathbf{v}_{i}\right\} _{1}^{K}$. We assume
that $\left\{ \mathbf{v}_{i}\right\} _{1}^{K}$ and $\left\{ \mathbf{u}%
_{i}\right\} _{1}^{K}$ are mutually independent, and $\left\{ \mathbf{G}%
_{i}\right\} _{1}^{K}$ are mutually independent Gaussian random matrices. No
assumption is needed about the statistics of $\left\{ \mathbf{H}_{i}\right\}
_{1}^{K}$ across the $K$ channel uses, since its realization is known to
both Alice and Eve.

Since $\left\{ \mathbf{v}_{i}\right\} _{1}^{K}$ and $\left\{ \mathbf{u}%
_{i}\right\} _{1}^{K}$ are mutually independent, from (\ref{SE1}) and (\ref%
{SE2}), we only need to demonstrate the encryption process for one
transmitted vector $\mathbf{u}_{i}$. For simplicity, we drop the subscript $%
i $.

For each $\mathbf{{{\mathbf{u}}}}$, Alice randomly and independently
(without any predefined distribution) chooses a one-time pad secret key $%
\mathbf{v}$, from a ball of radius $\sqrt{P_{\text{v}}}$:%
\begin{equation}
S\triangleq \left\{ \mathbf{v}\in \mathbb{C}^{N_{\text{A}}-N_{\text{B}}}%
\text{: }||\mathbf{v||}^{2}\leq P_{\text{v}}\right\} \text{,}  \label{set}
\end{equation}%
and transmits%
\begin{equation}
\mathbf{x=\mathbf{\mathbf{\mathbf{V}}}}_{1}\mathbf{{{\mathbf{u}}}+Zv}\text{.}
\label{Bob_U}
\end{equation}

In the worst-case scenario, when $\mathbf{n}_{\text{E}}=\mathbf{0}$, Eve
will receive (\ref{WC_S}), i.e.,%
\begin{equation}
\mathbf{y}=\mathbf{G\mathbf{\mathbf{\mathbf{\mathbf{V}}}}}_{1}\mathbf{%
\mathbf{{{\mathbf{u}}}}{{\mathbf{+}}}\tilde{n}}_{\text{v}}\text{,}
\label{Eve_U}
\end{equation}%
where $\mathbf{\tilde{n}}_{\text{v}}=\mathbf{G}\mathbf{Z}\mathbf{v}$.

The signal model (\ref{Eve_U}) can be interpreted as an encryption
algorithm, that is, the secret message $\mathbf{u}$ is encrypted to $\mathbf{%
y}$ using a secret key $\mathbf{v}$, which is not released neither to Bob
nor to Eve.

The message $\mathbf{u}$ is received by Eve as a lattice point (see Fig.~\ref%
{fig_k_th_point}) in:%
\begin{equation}
\Lambda _{\mathbb{C}}=\{\mathbf{G\mathbf{\mathbf{\mathbf{\mathbf{V}}}}}_{1}%
\mathbf{\mathbf{{{\mathbf{u}}}}},\mathbf{u}\in {\mathbb{Z}\left[ i\right]
^{N_{\text{B}}}\}}\text{.}  \label{INF_LT}
\end{equation}%
This enables us to partition the set $S$ into $D$ disjoint subsets $S_{1}$,
..., $S_{D}$, such that%
\begin{equation}
S=\bigcup\limits_{k=1}^{D}S_{k}\text{,}  \label{subset}
\end{equation}%
where%
\begin{equation}
S_{k}\triangleq \left\{ \mathbf{v}\text{: }\mathbf{G\mathbf{\mathbf{\mathbf{%
\mathbf{V}}}}}_{1}\mathbf{\mathbf{{{\mathbf{u}}}}}\in \Lambda _{\mathbb{C}}%
\text{ is the }k^{\text{th}}\text{ closest lattice point to }\mathbf{y}%
\right\} \text{.}  \label{sub_k}
\end{equation}

As shown in Fig.~1, the value of $D$ is determined by%
\begin{equation}
D=\left\vert S_{R_{\max }}\cap \Lambda _{\mathbb{C}}\right\vert \text{,}
\end{equation}%
where $S_{R_{\max }}$ is a sphere centered at $\mathbf{y}$ with radius%
\begin{equation}
R_{\max }(P_{\text{v}})\triangleq \max_{||\mathbf{v||}^{2}\leq P_{\text{v}%
}}\left\Vert \mathbf{G}\mathbf{Z}\mathbf{v}\right\Vert =\sqrt{\lambda _{\max
}P_{\text{v}}}\text{,}  \label{R_max}
\end{equation}%
where $\lambda _{\max }$ is the largest eigenvalue of $(\mathbf{G}\mathbf{Z}%
)^{H}(\mathbf{G}\mathbf{Z})$.

Assuming $\mathbf{v}\in S_{k}$, $1\leq k\leq D$, the signal model (\ref%
{Eve_U}) can be further viewed as an encryption algorithm that encrypts $%
\mathbf{{{\mathbf{u}}}}$ to $\mathbf{y}$ using a one time pad secret key $%
\mathbf{v}$, such that $\mathbf{G\mathbf{\mathbf{\mathbf{\mathbf{V}}}}}_{1}%
\mathbf{\mathbf{{{\mathbf{u}}}}}$ is the $k^{\text{th}}$ closest lattice
point to $\mathbf{y}$.

The security problem lies in how much Eve knows about $k$. The value of $k$
is uniquely determined by the vector $\mathbf{\tilde{n}}_{\text{v}}$. Since
we assume that the realizations of $\mathbf{G}$ and $\mathbf{Z}$ are known
to Eve, $k$ is a function of $\mathbf{v}$. Since $\mathbf{v}$ is randomly
and independently selected by Alice and is never disclosed to anyone, Eve
can neither know its realization nor its distribution. Thus, given $\mathbf{y%
}$, Eve is not able to estimate the distribution of the index $k$.

\begin{remark}
The index $k$ can be interpreted as the \emph{effective} one-time pad secret
key, whose randomness comes from the artificial noise. The effective key
space size is $D$.
\end{remark}

From Eve's perspective, we assume that she knows $P_{\text{v}}$, $R_{\max
}(P_{\text{v}})$, $D$ and the encryption process (\ref{Eve_U}). Based on the
above analysis, given $\mathbf{y}$, Eve only knows that $\mathbf{G\mathbf{%
\mathbf{\mathbf{\mathbf{V}}}}}_{1}\mathbf{\mathbf{{{\mathbf{u}}}}}\in
S_{R_{\max }}\cap \Lambda _{\mathbb{C}}$. Therefore, the posterior
probability that Eve obtains $\mathbf{{{\mathbf{u}}}}$, or equivalently,
finds $k$, from the cryptogram $\mathbf{y}$, is equal to%
\begin{equation}
\Pr \left\{ \mathbf{{{\mathbf{u}}}|y}\right\} =\Pr \left\{ k\mathbf{|y}%
\right\} =\Pr \left\{ \mathbf{{{\mathbf{u}}}|{{\mathbf{u}}}}\in \mathcal{U}%
\right\} \text{,}  \label{Post_Pr}
\end{equation}%
where%
\begin{equation}
\mathcal{U}\triangleq \left\{ \mathbf{u}^{\prime }\text{: }\mathbf{G\mathbf{%
\mathbf{\mathbf{\mathbf{V}}}}}_{1}\mathbf{\mathbf{{{\mathbf{u}}}}}^{\prime
}\in S_{R_{\max }}\cap \Lambda _{\mathbb{C}}\right\} \text{,}
\end{equation}%
and $\left\vert \mathcal{U}\right\vert =D$.

For any $\mathbf{u}^{\prime }\in \mathcal{U}$, using Bayes' theorem, we have%
\begin{eqnarray}
\Pr \left\{ \mathbf{{{\mathbf{u}}}}=\mathbf{u}^{\prime }\mathbf{|{{\mathbf{u}%
}}}\in \mathcal{U}\right\} &=&\frac{\Pr \left\{ \mathbf{{{\mathbf{u}}}}=%
\mathbf{u}^{\prime }\right\} \Pr \left\{ \mathbf{{{\mathbf{u}}}}\in \mathcal{%
U}|\mathbf{{{\mathbf{u}}}}=\mathbf{u}^{\prime }\right\} }{\Pr \left\{
\mathbf{{{\mathbf{u}}}}\in \mathcal{U}\right\} }  \notag \\
&=&\frac{\Pr \left\{ \mathbf{{{\mathbf{u}}}}=\mathbf{u}^{\prime }\right\} }{%
\Pr \left\{ \mathbf{{{\mathbf{u}}}}\in \mathcal{U}\right\} }\text{.}
\label{Post_Pr_d}
\end{eqnarray}%
\

From (\ref{Post_Pr}) and (\ref{Post_Pr_d}), Eve's equivocation is given by%
\begin{equation}
H(\mathbf{{{\mathbf{u}}}|y})=H(k\mathbf{|y})=\sum\limits_{\mathbf{u}^{\prime
}\in \mathcal{U}}\frac{\Pr \left\{ \mathbf{{{\mathbf{u}}}}=\mathbf{u}%
^{\prime }\right\} }{\Pr \left\{ \mathbf{{{\mathbf{u}}}}\in \mathcal{U}%
\right\} }\log \frac{\Pr \left\{ \mathbf{{{\mathbf{u}}}}\in \mathcal{U}%
\right\} }{\Pr \left\{ \mathbf{{{\mathbf{u}}}}=\mathbf{u}^{\prime }\right\} }%
\text{.}  \label{uncer}
\end{equation}

Since%
\begin{equation}
\Pr \left\{ \mathbf{{{\mathbf{u}}}}\in \mathcal{U}\right\} =\sum\limits_{%
\mathbf{u}^{\prime }\in \mathcal{U}}\Pr \left\{ \mathbf{{{\mathbf{u}}}}=%
\mathbf{u}^{\prime }\right\} \text{,}
\end{equation}%
the security level is determined by the cardinality of the set $\mathcal{U}$%
, or more specifically, by the value of $D$:

\begin{enumerate}
\item if $D=1$, then $\Pr \left\{ \mathbf{{{\mathbf{u}}}}\in \mathcal{U}%
\right\} =\Pr \left\{ \mathbf{{{\mathbf{u}}}}=\mathbf{u}^{\prime }\right\} $%
, so that%
\begin{equation}
H(k\mathbf{|y})=H(\mathbf{{{\mathbf{u}}}|y})=0\text{.}  \tag{no security}
\end{equation}

\item if $D\geq 2$, then $\Pr \left\{ \mathbf{{{\mathbf{u}}}}\in \mathcal{U}%
\right\} >\Pr \left\{ \mathbf{{{\mathbf{u}}}}=\mathbf{u}^{\prime }\right\} $%
, so that%
\begin{equation}
H(k\mathbf{|y})=H(\mathbf{{{\mathbf{u}}}|y})>0\text{.}  \tag{ideal secrecy}
\end{equation}

\item as $D\rightarrow \infty $, then $\Pr \left\{ \mathbf{{{\mathbf{u}}}}%
\in \mathcal{U}\right\} \rightarrow 1$, so that%
\begin{equation}
H(k\mathbf{|y})=H(\mathbf{{{\mathbf{u}}}|y})=H(\mathbf{{{\mathbf{u}}}})\text{%
.}  \tag{perfect secrecy}
\end{equation}
\end{enumerate}

\begin{remark}
Different from Shannon's one-time pad cryptosystem, the effective\emph{\ }%
one-time pad secret key $k$ is not shared between Alice and Bob. In
particular, it is independently generated by Alice, but not needed by Bob to
decipher, while it is fully affecting Eve's ability to decipher the original
message. This motivates the name of this cryptosystem as \emph{Unshared
Secret Key} (USK) cryptosystem.
\end{remark}

\begin{figure}[tbp]
\centering\includegraphics[scale=0.65]{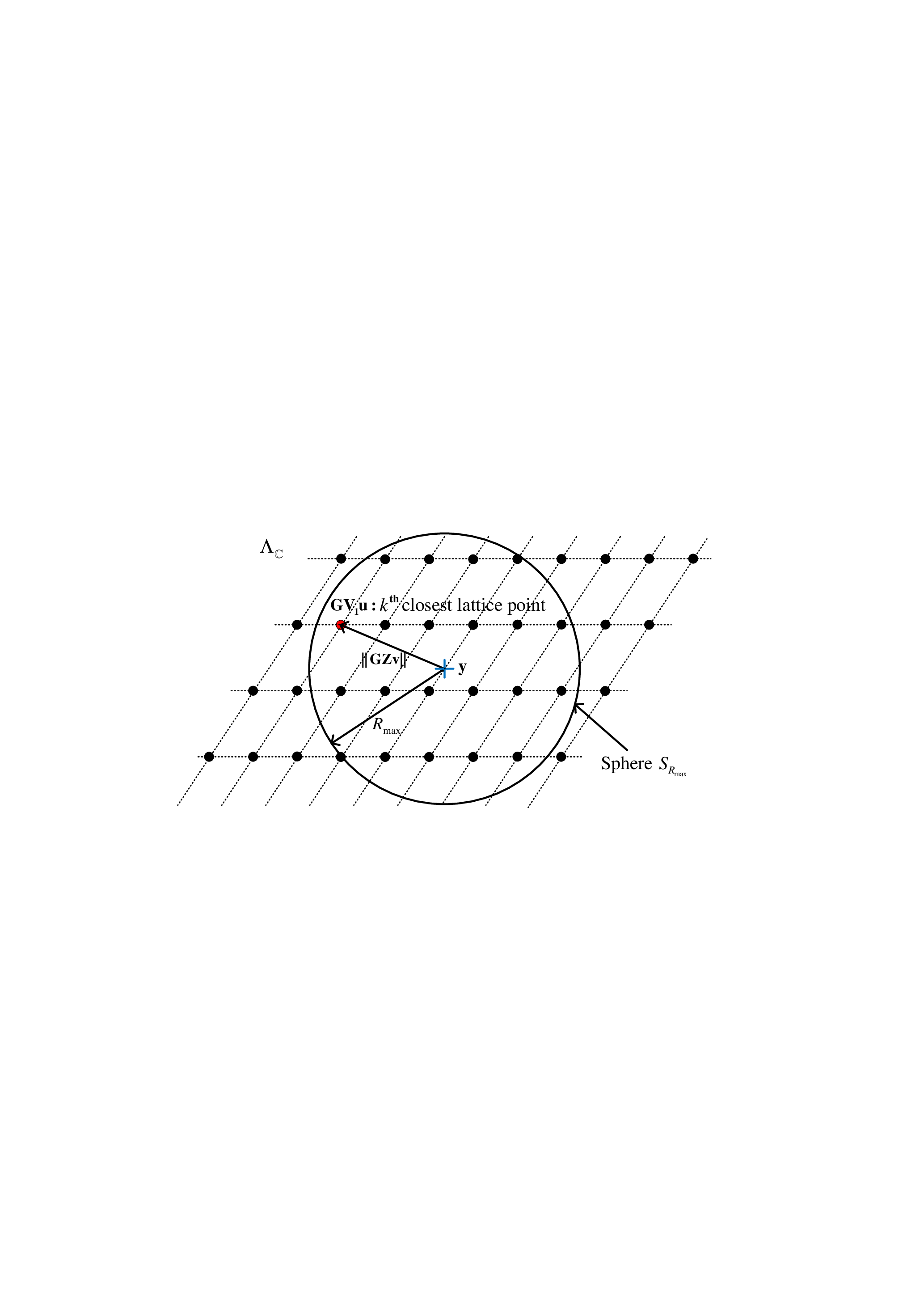}
\caption{ The USK cryptosystem with infinite constellations.}
\label{fig_k_th_point}
\end{figure}

\subsection{Analyzing Eve's Equivocation}

As shown in (\ref{uncer}), Eve's equivocation lies in the value of $D$,
which is known to Eve but not to Alice. We then estimate the value of $D$
from Alice's perspective. According to (\ref{subset}) and (\ref{sub_k}), $D$
is a function of $P_{\text{v}}$, $\mathbf{H}$, and $\mathbf{G}$. Alice knows
only$\ P_{\text{v}}$ and $\mathbf{H}$, while regarding $\mathbf{G}$, she
knows the statistics, but doesn't know the realization. Although Alice
cannot know the exact value of $D$, she is able to estimate its cumulative
distribution function (cdf), denoted by%
\begin{equation}
F_{D}(d\text{, }P_{\text{v}})\triangleq \Pr \left\{ D<d\right\} \text{,}
\label{F_D}
\end{equation}%
where $d$ is a positive integer.

In the next section, we will show that Alice can ensure $F_{D}(d$, $P_{\text{%
v}})\rightarrow 0$ by increasing $P_{\text{v}}$, i.e., she can guarantee
that $D\geq d$, for any given $d$.

\section{The Security of USK with Infinite Constellations}

In this section, we show that the USK with infinite constellations provides
Shannon's ideal secrecy and perfect secrecy. To prove the main theorems, we
first introduce some lemmas.

We first define%
\begin{equation}
\kappa (d)\triangleq d^{1/(2N_{\text{E}})}/\sqrt{\pi }\text{,}  \label{kappa}
\end{equation}%
\begin{equation}
\Delta \left( d\right) \triangleq \frac{\kappa (d)^{2N_{\text{E}}}\text{vol}%
(\Lambda _{\mathbb{C}})}{P_{\text{v}}^{N_{\text{E}}}}\text{,}  \label{delta}
\end{equation}%
where $d$ is an integer and%
\begin{equation}
\text{vol}(\Lambda _{\mathbb{C}})=|\det (\left( \mathbf{GV}_{1}\right)
^{H}\left( \mathbf{GV}_{1}\right) )|\text{.}
\end{equation}

Here, $\mathbf{G}$ is a complex Gaussian random matrix, while $\mathbf{V}%
_{1} $ is deterministic. Thus, $\Delta \left( d\right) $ is a random
variable from Alice perspective. The following two lemmas are used to
evaluate $F_{D}(d$, $P_{\text{v}})$ in (\ref{F_D}).

\begin{lemma}
\label{lem2}If $P_{\text{v}}\geq \rho ^{2}/\Phi ^{2N_{\text{B}}/N_{\text{E}%
}} $ and $\rho >\kappa (d)$, then $\Delta (d)\overset{a.s.}{\rightarrow }0$
as $N_{\text{B}}\rightarrow \infty $, or equivalently,%
\begin{equation}
\Pr \left\{ \Delta (d)>\left( \frac{\rho }{\kappa (d)}\right) ^{-N_{\text{B}%
}}\right\} <O\left( \left( \frac{\rho }{\kappa (d)}\right) ^{-N_{\text{B}%
}}\right)  \label{Pr_inf}
\end{equation}%
where%
\begin{equation}
\Phi =\left[ \frac{(N_{\text{E}}-N_{\text{B}})!}{N_{\text{E}}!}\right] ^{%
\frac{1}{2N_{\text{B}}}}\text{.}  \label{Pi_SVD}
\end{equation}
\end{lemma}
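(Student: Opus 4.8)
The plan is to derive a single tail bound on $\Delta(d)$ by Markov's inequality and then promote it to almost-sure convergence via Borel--Cantelli; the only nontrivial ingredient is the first moment of $\text{vol}(\Lambda_{\mathbb{C}})$. First I would simplify the target. Since $\kappa(d)^{2N_{\text{E}}}=d/\pi^{N_{\text{E}}}$ by (\ref{kappa}), definition (\ref{delta}) collapses to $\Delta(d)=d\,\text{vol}(\Lambda_{\mathbb{C}})/(\pi P_{\text{v}})^{N_{\text{E}}}$, so all the randomness sits in $\text{vol}(\Lambda_{\mathbb{C}})=\det((\mathbf{GV}_1)^{H}(\mathbf{GV}_1))$.

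The next step identifies the law of this determinant. Because $\mathbf{V}_1$ is formed from $N_{\text{B}}$ orthonormal columns of the unitary matrix $\mathbf{V}$ and $\mathbf{G}$ has i.i.d.\ $\mathcal{N}_{\mathbb{C}}(0,1)$ entries, unitary invariance of the complex Gaussian ensemble makes $\mathbf{W}\triangleq\mathbf{GV}_1\in\mathbb{C}^{N_{\text{E}}\times N_{\text{B}}}$ again an i.i.d.\ $\mathcal{N}_{\mathbb{C}}(0,1)$ matrix, so that $\mathbf{W}^{H}\mathbf{W}$ is complex Wishart with $N_{\text{E}}$ degrees of freedom. I would then invoke the Bartlett-type (Goodman) factorization, under which $\det(\mathbf{W}^{H}\mathbf{W})$ equals in distribution a product of $N_{\text{B}}$ independent $\mathcal{X}^2$ factors of decreasing degrees of freedom; taking expectations gives the exact moment $\text{E}(\text{vol}(\Lambda_{\mathbb{C}}))=\prod_{j=1}^{N_{\text{B}}}(N_{\text{E}}-j+1)=N_{\text{E}}!/(N_{\text{E}}-N_{\text{B}})!=\Phi^{-2N_{\text{B}}}$, the last equality being precisely (\ref{Pi_SVD}). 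This is exactly why $\Phi$ is defined as it is and why the hypothesis reads $P_{\text{v}}\geq\rho^2/\Phi^{2N_{\text{B}}/N_{\text{E}}}$: raising the hypothesis to the power $N_{\text{E}}$ yields $P_{\text{v}}^{N_{\text{E}}}\geq\rho^{2N_{\text{E}}}\Phi^{-2N_{\text{B}}}$, which cancels the $\Phi^{-2N_{\text{B}}}$ arising from the moment.

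Combining these, I would obtain $\text{E}(\Delta(d))\leq d/(\pi\rho^2)^{N_{\text{E}}}=(\kappa(d)/\rho)^{2N_{\text{E}}}$, using $\kappa(d)^{2N_{\text{E}}}=d/\pi^{N_{\text{E}}}$ once more. Writing $\alpha\triangleq\rho/\kappa(d)>1$ (guaranteed by $\rho>\kappa(d)$), Markov's inequality at threshold $\alpha^{-N_{\text{B}}}$ gives $\Pr\{\Delta(d)>\alpha^{-N_{\text{B}}}\}\leq\alpha^{-2N_{\text{E}}}/\alpha^{-N_{\text{B}}}=\alpha^{N_{\text{B}}-2N_{\text{E}}}$. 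The AN assumption $N_{\text{E}}\geq N_{\text{B}}$ forces $N_{\text{B}}-2N_{\text{E}}\leq-N_{\text{B}}$, so the bound is at most $\alpha^{-N_{\text{B}}}=(\rho/\kappa(d))^{-N_{\text{B}}}$, which is (\ref{Pr_inf}). Since $\alpha>1$ these bounds are summable over $N_{\text{B}}$, so Borel--Cantelli yields $\Delta(d)\leq\alpha^{-N_{\text{B}}}$ for all large $N_{\text{B}}$ almost surely, and as $\alpha^{-N_{\text{B}}}\to0$ this gives $\Delta(d)\overset{a.s.}{\rightarrow}0$.

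The main obstacle is the moment computation in the middle step: everything else is substitution of the definitions plus two textbook inequalities, but the proof hinges on recognizing $\mathbf{GV}_1$ as a standard complex Gaussian matrix and evaluating $\text{E}(\det(\mathbf{W}^{H}\mathbf{W}))$ in closed form so that it matches $\Phi^{-2N_{\text{B}}}$ exactly. A secondary technicality is the almost-sure interpretation, which requires viewing the family $\{\Delta(d)\}$ indexed by $N_{\text{B}}$ as living on one probability space (e.g.\ nested submatrices of a single infinite Gaussian array); this is handled automatically by the Borel--Cantelli summability once the tail bound is in place.
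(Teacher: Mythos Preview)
Your argument is correct and is in fact cleaner than the paper's. Both proofs start from the same two facts---that $\mathbf{GV}_1$ inherits the i.i.d.\ complex Gaussian law by unitary invariance, and that $\det((\mathbf{GV}_1)^H\mathbf{GV}_1)$ factors as $\prod_{i=1}^{N_{\text{B}}}\tfrac{1}{2}\mathcal{X}^2(2(N_{\text{E}}-i+1))$---but diverge immediately thereafter. The paper passes to logarithms, applies a central limit heuristic to $\sum_i\log\mathcal{X}^2(2(N_{\text{E}}-i+1))$ (with independent but non-identical summands), approximates the mean and variance via the digamma and trigamma functions, and then uses a Gaussian tail bound on the resulting normal approximation. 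You instead take expectations directly to get $\text{E}(\text{vol}(\Lambda_{\mathbb{C}}))=\Phi^{-2N_{\text{B}}}$ exactly, feed this into Markov's inequality, and use $N_{\text{E}}\geq N_{\text{B}}$ to close the exponent gap. Your route is shorter, avoids the somewhat informal CLT step and the special-function approximations, and yields the stated $O((\rho/\kappa(d))^{-N_{\text{B}}})$ bound as a genuine non-asymptotic inequality rather than via a limiting distribution; the paper's route, on the other hand, reveals more structure (the log-determinant is asymptotically Gaussian) which could in principle give sharper constants.
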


\begin{proof}
See Appendix A.
\end{proof}

We next provide a more accurate expression of the tail distribution of $%
\Delta \left( d\right) $ for finite $N_{\text{B}}$.

\begin{lemma}
\label{lem3}If $P_{\text{v}}\geq \rho ^{2}/\Phi ^{2N_{\text{B}}/N_{\text{E}%
}} $ and $\rho >\kappa (d)$, then%
\begin{equation}
\Pr \left\{ \Delta (d)>\left( \frac{\rho }{\kappa (d)}\right) ^{-N_{\text{B}%
}}\right\} <\Upsilon \left( \frac{\rho }{\kappa (d)}\right) \text{,}
\label{Pr_Fin}
\end{equation}%
where $\kappa (d)$ is given in (\ref{kappa}), $\Phi $ is given (\ref{Pi_SVD}%
), and%
\begin{equation}
\Upsilon (x)=\sum_{i=1}^{N_{\text{B}}}\left( xe^{1-x}\right) ^{N_{\text{E}%
}-i+1}\text{.}  \label{BR_SVD_1}
\end{equation}
\end{lemma}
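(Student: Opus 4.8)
The plan is to reduce $\Delta(d)$ to the determinant of a complex Wishart matrix, factor that determinant into independent Gamma variables via Gram--Schmidt, and then control its right tail by a union bound together with a Chernoff estimate on each factor.

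First I would note that, since $\mathbf{V}_1$ has orthonormal columns and $\mathbf{G}$ has i.i.d.\ $\mathcal{N}_{\mathbb{C}}(0,1)$ entries, the matrix $\mathbf{W}\triangleq\mathbf{GV}_1\in\mathbb{C}^{N_{\text{E}}\times N_{\text{B}}}$ again has i.i.d.\ $\mathcal{N}_{\mathbb{C}}(0,1)$ entries: each row is $\mathbf{g}^{T}\mathbf{V}_1$ with covariance $\mathbf{V}_1^{H}\mathbf{V}_1=\mathbf{I}_{N_{\text{B}}}$, and the rows are independent. Hence $\text{vol}(\Lambda_{\mathbb{C}})=\det(\mathbf{W}^{H}\mathbf{W})$ is a complex Wishart determinant with $N_{\text{E}}\geq N_{\text{B}}$ degrees of freedom. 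Applying Gram--Schmidt to the columns of $\mathbf{W}$ gives $\det(\mathbf{W}^{H}\mathbf{W})\overset{d}{=}\prod_{i=1}^{N_{\text{B}}}Y_i$, where $Y_i$ is the squared norm of the component of the $i$-th column orthogonal to the span of the previous $i-1$; this component is an isotropic complex Gaussian in a space of complex dimension $N_{\text{E}}-i+1$, so the $Y_i$ are mutually independent with $Y_i\sim\tfrac12\mathcal{X}^2(2(N_{\text{E}}-i+1))$, i.e.\ Gamma with shape $m_i\triangleq N_{\text{E}}-i+1$, unit scale, and $\mathrm{E}(Y_i)=m_i$.

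Second, writing $t\triangleq\rho/\kappa(d)>1$ and taking the \emph{worst case} $P_{\text{v}}=\rho^2/\Phi^{2N_{\text{B}}/N_{\text{E}}}$ (a larger $P_{\text{v}}$ only shrinks $\Delta(d)$, hence the probability), I would use $\Phi^{2N_{\text{B}}}=(N_{\text{E}}-N_{\text{B}})!/N_{\text{E}}!$ and $\prod_i m_i=N_{\text{E}}!/(N_{\text{E}}-N_{\text{B}})!$ to obtain
\[\Delta(d)=t^{-2N_{\text{E}}}\prod_{i=1}^{N_{\text{B}}}\frac{Y_i}{m_i}.\]
The pivotal step is an event inclusion. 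If $Y_i\leq t\,m_i$ for every $i$, then $\prod_i Y_i\leq t^{N_{\text{B}}}\prod_i m_i$, so $\Delta(d)\leq t^{-2N_{\text{E}}+N_{\text{B}}}$; because $t>1$ and $N_{\text{E}}\geq N_{\text{B}}$ (hence $2N_{\text{E}}-N_{\text{B}}\geq N_{\text{B}}$), this forces $\Delta(d)\leq t^{-N_{\text{B}}}$. Taking the contrapositive,
\[\{\Delta(d)>t^{-N_{\text{B}}}\}\subseteq\bigcup_{i=1}^{N_{\text{B}}}\{Y_i>t\,m_i\}.\]

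Finally, I would invoke the union bound and the Chernoff tail estimate for Gamma variables: for $Y\sim\text{Gamma}(k,1)$ and $a>1$ one has $\Pr\{Y>ak\}\leq(ae^{1-a})^{k}$ (optimizing $(1-s)^{-k}e^{-sx}$ at $s=1-k/x$). With $a=t$ and $k=m_i$ this yields $\Pr\{Y_i>t\,m_i\}\leq(te^{1-t})^{m_i}$, so
\[\Pr\{\Delta(d)>t^{-N_{\text{B}}}\}\leq\sum_{i=1}^{N_{\text{B}}}(te^{1-t})^{N_{\text{E}}-i+1}=\Upsilon(t),\]
with strict inequality because the union bound is strict over the overlapping tail events. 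The main obstacle is the second paragraph: matching exponents so that the per-factor ``fair-share'' threshold $t\,m_i$ is simultaneously loose enough that the product bound $t^{-2N_{\text{E}}+N_{\text{B}}}$ still undercuts the target $t^{-N_{\text{B}}}$ (this is exactly where $N_{\text{E}}\geq N_{\text{B}}$ is consumed) yet tight enough that the Chernoff bounds reassemble into precisely $\Upsilon$; the Wishart factorization and independence of the $Y_i$ are the other technical point, but are standard.
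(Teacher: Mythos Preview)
Your proof is correct and follows essentially the same route as the paper: factor $\det(\mathbf{W}^{H}\mathbf{W})$ into independent $\tfrac12\mathcal{X}^{2}(2(N_{\text{E}}-i+1))$ variables, reduce the tail event on $\Delta(d)$ to a union of per-factor exceedances $\{Y_i>t\,m_i\}$, and bound each by the Chernoff Gamma tail $(te^{1-t})^{m_i}$. The only cosmetic difference is that the paper inserts an AM--GM step (passing from the product $\Psi$ to the arithmetic mean $\tfrac{1}{N_{\text{B}}}\sum_i Y_i/m_i$ before invoking the union bound), whereas your contrapositive inclusion goes directly from the product to the union; both arrive at the identical $\Upsilon$.
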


\begin{proof}
See Appendix B.
\end{proof}

\begin{remark}
From (\ref{BR_SVD_1}), it is easy to see that $\Upsilon (x)$ is
monotonically decreasing function. Let
\begin{equation}
N\triangleq N_{\text{E}}-N_{\text{B}}+1\text{,}  \label{N}
\end{equation}%
then, as $x\rightarrow \infty $, we have%
\begin{equation}
\Upsilon (x)=O\left( \left( x^{-1}e^{x}\right) ^{-N}\right) =O\left(
e^{-xN}\right) \text{.}  \label{NLP0}
\end{equation}
\end{remark}

Lemmas \ref{lem2} and \ref{lem3} enable us to prove the following lemma.

\begin{lemma}
\label{lem1}If $P_{\text{v}}\geq \rho ^{2}/\Phi ^{2N_{\text{B}}/N_{\text{E}%
}} $ and $\rho >\kappa (d)$, $F_{D}(d,P_{\text{v}})\rightarrow 0$ as $N_{%
\text{B}}\rightarrow \infty $, or equivalently,%
\begin{equation}
F_{D}(d,P_{\text{v}})<O\left( \left( \frac{\rho }{\kappa (d)}\right) ^{-N_{%
\text{B}}}\right) \text{,}  \label{L3_1}
\end{equation}%
and for finite $N_{\text{B}}$, we have%
\begin{equation}
F_{D}(d,P_{\text{v}})<\left( \frac{\rho }{\kappa (d)}\right) ^{-N_{\text{B}%
}}+\Upsilon \left( \frac{\rho }{\kappa (d)}\right) \text{,}
\end{equation}%
where $\kappa (d)$ is given in (\ref{kappa}), $\Phi $ is given in (\ref%
{Pi_SVD}), and $\Upsilon (x)$ is given in (\ref{BR_SVD_1}).
\end{lemma}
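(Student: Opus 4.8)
The plan is to convert the statement about $F_{D}(d,P_{\text{v}})=\Pr\{D<d\}$ (see \ref{F_D}) into a statement about the upper tail of $\Delta(d)$, which Lemmas \ref{lem2} and \ref{lem3} already control, and to absorb the remaining randomness (the radius $R_{\max}$) into a single exceptional event. Concretely, I would first prove a \emph{deterministic} geometric implication $\{\Delta(d)\le\tau\}\Rightarrow\{D\ge d\}$ for a threshold $\tau$ depending only on $\lambda_{\max}$, and then split the probability of its failure against the level $t\triangleq(\rho/\kappa(d))^{-N_{\text{B}}}$ that appears in \ref{Pr_inf} and \ref{Pr_Fin}.

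For the geometric step, recall that $D=|S_{R_{\max}}\cap\Lambda_{\mathbb{C}}|$ counts points of $\Lambda_{\mathbb{C}}$ inside a ball of radius $R_{\max}=\sqrt{\lambda_{\max}P_{\text{v}}}$ (see \ref{R_max}). I would lower-bound this count by the volume ratio $\text{vol}(S_{R_{\max}})/\text{vol}(\Lambda_{\mathbb{C}})$ via the effective-radius estimate \ref{r_eff}: the ball contains at least $d$ lattice points once $R_{\max}\ge d^{1/(2N_{\text{E}})}\,r_{\text{eff}}(\Lambda_{\mathbb{C}})$. Substituting \ref{r_eff}, using $d^{1/(2N_{\text{E}})}=\sqrt{\pi}\,\kappa(d)$ from \ref{kappa}, squaring, and raising to the $N_{\text{E}}$-th power turns this into $\kappa(d)^{2N_{\text{E}}}\,\text{vol}(\Lambda_{\mathbb{C}})\le(\lambda_{\max}e/N_{\text{E}})^{N_{\text{E}}}P_{\text{v}}^{N_{\text{E}}}$, i.e. exactly $\Delta(d)\le\tau$ with $\tau=(\lambda_{\max}e/N_{\text{E}})^{N_{\text{E}}}\approx\lambda_{\max}^{N_{\text{E}}}/N_{\text{E}}!$ after Stirling, where $\Delta(d)$ is as in \ref{delta}. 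Hence $\{D<d\}\subseteq\{\Delta(d)>\tau\}$.

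It then remains to bound $\Pr\{\Delta(d)>\tau\}$, and I would split it at the fixed level $t$: $\Pr\{\Delta(d)>\tau\}\le\Pr\{\Delta(d)>t\}+\Pr\{\tau<t\}$, since $\Delta(d)>\tau\ge t$ forces $\Delta(d)>t$. The first term is handled directly by Lemma \ref{lem2} (giving $O(t)\to0$ as $N_{\text{B}}\to\infty$) and by Lemma \ref{lem3} (giving $\Upsilon(\rho/\kappa(d))$ for finite $N_{\text{B}}$). For the exceptional term, $\{\tau<t\}$ is $\{\lambda_{\max}<(N_{\text{E}}/e)\,t^{1/N_{\text{E}}}\}$; since $\lambda_{\max}\ge\|\mathbf{G}\mathbf{Z}\mathbf{e}\|^{2}$ for any unit vector $\mathbf{e}$, and $\|\mathbf{G}\mathbf{Z}\mathbf{e}\|^{2}$ is a sum of $N_{\text{E}}$ i.i.d. unit exponentials whose lower tail near $0$ behaves like $\varepsilon^{N_{\text{E}}}/N_{\text{E}}!$, this probability is at most $t=(\rho/\kappa(d))^{-N_{\text{B}}}$ (up to a benign $1/\sqrt{2\pi N_{\text{E}}}$ Stirling factor). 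Adding the two contributions reproduces both claims: $F_{D}(d,P_{\text{v}})<O(t)\to0$ in the limit, and $F_{D}(d,P_{\text{v}})<(\rho/\kappa(d))^{-N_{\text{B}}}+\Upsilon(\rho/\kappa(d))$ for finite $N_{\text{B}}$.

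The main obstacle is the geometric counting step, that is, making the inequality $D\ge\text{vol}(S_{R_{\max}})/\text{vol}(\Lambda_{\mathbb{C}})$ rigorous. Two features make it delicate: the radius $R_{\max}$ is governed by the block $\mathbf{G}\mathbf{Z}$ whereas $\text{vol}(\Lambda_{\mathbb{C}})$ is governed by the independent block $\mathbf{G}\mathbf{V}_{1}$, and $\Lambda_{\mathbb{C}}$ has rank $N_{\text{B}}$ while the ambient space and the exponents in $\Delta(d)$ live in dimension $N_{\text{E}}$. A clean proof must justify applying \ref{r_eff} at the ambient dimension $N_{\text{E}}$, controlling the covering-radius correction so that a ball of volume at least $d\,\text{vol}(\Lambda_{\mathbb{C}})$ genuinely contains at least $d$ points, and must pin $\Pr\{\tau<t\}$ down to the stated $t$ with the correct constant. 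Everything downstream is then a direct application of Lemmas \ref{lem2} and \ref{lem3} together with this two-term union bound.
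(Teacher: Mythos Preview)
Your route is genuinely different from the paper's. You try to control $D$ through $R_{\max}=\sqrt{\lambda_{\max}P_{\text{v}}}$ and then split against the lower tail of $\lambda_{\max}$. The paper instead \emph{fixes a single} $\mathbf{v}_0$ on the boundary of $S$ (so $\|\mathbf{v}_0\|^2=P_{\text{v}}$), lets $k_0$ be its index, and uses $D\ge k_0$ to write $F_D(d,P_{\text{v}})\le\Pr\{k_0\le d\}$. Via the same counting heuristic $K\approx\text{vol}(\mathcal{S}_R)/\text{vol}(\Lambda_{\mathbb{C}})$, the event $\{k_0\le d\}$ is identified with $\{\mathbf{GZv}_0\in\mathcal{S}_{R}'\}$, a ball of volume $d\cdot\text{vol}(\Lambda_{\mathbb{C}})$. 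The crucial point is that, conditional on $\mathbf{v}_0$, the vector $\mathbf{GZv}_0$ is i.i.d.\ $\mathcal{N}_{\mathbb{C}}(0,P_{\text{v}})$ over Alice's randomness in $\mathbf{G}$, so the Gaussian pdf bound gives $\Pr\{\mathbf{GZv}_0\in\mathcal{S}_R'\mid\text{vol}(\mathcal{S}_R')\le C\}\le C/(\pi P_{\text{v}})^{N_{\text{E}}}$. Choosing $C=\pi^{N_{\text{E}}}P_{\text{v}}^{N_{\text{E}}}(\rho/\kappa(d))^{-N_{\text{B}}}$ produces the first term $(\rho/\kappa(d))^{-N_{\text{B}}}$ directly, and the complementary event $\{\text{vol}(\mathcal{S}_R')>C\}$ is \emph{exactly} $\{\Delta(d)>(\rho/\kappa(d))^{-N_{\text{B}}}\}$, which is handed to Lemmas~\ref{lem2} and~\ref{lem3}.

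What the paper's trick buys is that the first term comes from a clean Gaussian small-ball estimate in the ambient $\mathbb{C}^{N_{\text{E}}}$, with no need to analyse $\lambda_{\max}$ or to worry about how the rank-$N_{\text{B}}$ lattice sits inside $\mathbb{C}^{N_{\text{E}}}$: the bound $f(\tilde{\mathbf{n}}_{\text{v}})\le(\pi P_{\text{v}})^{-N_{\text{E}}}$ is insensitive to that geometry. Your route is more explicit about the role of $R_{\max}$, but it costs you an extra spectral lemma (the lower tail of $\lambda_{\max}$) and, as you correctly flag, leaves the counting step harder: you must intersect $S_{R_{\max}}$ with the $N_{\text{B}}$-dimensional span of $\Lambda_{\mathbb{C}}$, and the centre $\mathbf{y}$ generically sits off that subspace, so the effective radius inside the span can be strictly smaller than $R_{\max}$. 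Both arguments ultimately lean on the same volume-count heuristic $|S\cap\Lambda|\approx\text{vol}(S)/\text{vol}(\Lambda)$, but the paper's fixed-$\mathbf{v}_0$ device makes the remaining probabilistic step essentially a one-line density bound rather than an eigenvalue tail computation.
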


\begin{proof}
See Appendix C.
\end{proof}

\subsection{Achieving Ideal Secrecy}

From (\ref{SE1}) and\ the discussion following (\ref{uncer}), ideal secrecy
is achieved when $D\geq 2$. Lemma \ref{lem1} enable us to prove the
following equivalent theorem about achieving ideal secrecy.

\begin{theo}
\label{Th1}If $P_{\text{v}}>\kappa (d)^{2}/\Phi ^{2N_{\text{B}}/N_{\text{E}%
}} $ and $d\geq 2$, as $N_{\text{B}}\rightarrow \infty $,%
\begin{equation}
D\overset{a.s.}{\geq }d\text{,}
\end{equation}%
where $\kappa (d)$ is given in (\ref{kappa}) and $\Phi $ is given in (\ref%
{Pi_SVD}).
\end{theo}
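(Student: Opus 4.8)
The plan is to derive the theorem from Lemma~\ref{lem1}, which already bounds the lower tail of $D$ through $F_{D}(d,P_{\text{v}})=\Pr\{D<d\}$, and then to promote the resulting convergence in probability into almost-sure convergence by a Borel--Cantelli argument, exactly as Lemma~\ref{lem2} converts its summable tail estimate on $\Delta(d)$ into the limit $\Delta(d)\overset{a.s.}{\rightarrow}0$.

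First I would recast the hypothesis into the form required by Lemma~\ref{lem1}. The assumption $P_{\text{v}}>\kappa(d)^{2}/\Phi^{2N_{\text{B}}/N_{\text{E}}}$ is equivalent to $P_{\text{v}}\,\Phi^{2N_{\text{B}}/N_{\text{E}}}>\kappa(d)^{2}$, so one may select a parameter $\rho$ with
\[
\kappa(d)<\rho\leq\sqrt{P_{\text{v}}\,\Phi^{2N_{\text{B}}/N_{\text{E}}}}.
\]
Then both $\rho>\kappa(d)$ and $P_{\text{v}}\geq\rho^{2}/\Phi^{2N_{\text{B}}/N_{\text{E}}}$ hold, which are precisely the conditions of Lemma~\ref{lem1}, with $\kappa(d)$ as in (\ref{kappa}) and $\Phi$ as in (\ref{Pi_SVD}). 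Invoking that lemma gives
\[
\Pr\{D<d\}=F_{D}(d,P_{\text{v}})<O\!\left(\left(\frac{\rho}{\kappa(d)}\right)^{-N_{\text{B}}}\right).
\]

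Next I would upgrade this to the almost-sure statement. Since $\rho/\kappa(d)>1$, the right-hand side decays geometrically in $N_{\text{B}}$, so $\sum_{N_{\text{B}}}\Pr\{D<d\}<\infty$. Regarding the channel matrices for the successive dimensions as nested sub-blocks of a single infinite array of i.i.d.\ $\mathcal{N}_{\mathbb{C}}(0,1)$ entries places all the variables $D$ on one probability space; the Borel--Cantelli lemma then yields $\Pr\{D<d\text{ infinitely often}\}=0$, i.e.\ almost surely $D\geq d$ for every sufficiently large $N_{\text{B}}$, which is exactly the claimed $D\overset{a.s.}{\geq}d$. The restriction $d\geq 2$ plays no role in this argument beyond marking the regime relevant for ideal secrecy.

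I expect the main obstacle to be making the summability uniform in $N_{\text{B}}$. Because $\Phi$ in (\ref{Pi_SVD}) depends on both $N_{\text{B}}$ and $N_{\text{E}}$, the admissible ratio $\rho/\kappa(d)$ may drift along the sequence, whereas Borel--Cantelli requires the geometric rate to stay bounded away from $1$. I would resolve this by using the strict inequality in the hypothesis to fix a constant $c>1$ and set $\rho=c\,\kappa(d)$, then verifying that $P_{\text{v}}\geq c^{2}\kappa(d)^{2}/\Phi^{2N_{\text{B}}/N_{\text{E}}}$ persists for all large $N_{\text{B}}$, so that the decay rate $c^{-N_{\text{B}}}$ is summable with a fixed base. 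Making the common-probability-space construction explicit, so that \textquotedblleft almost surely, eventually $D\geq d$\textquotedblright\ is well posed across varying dimensions, is routine but worth stating.
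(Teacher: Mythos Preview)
Your proposal is correct and follows essentially the same route as the paper: invoke Lemma~\ref{lem1} to obtain $F_{D}(d,P_{\text{v}})<O\!\left((\rho/\kappa(d))^{-N_{\text{B}}}\right)$ and conclude. The paper's own proof is in fact a single sentence --- it cites (\ref{F_D}) and (\ref{L3_1}) and declares $\Pr\{D<d\}\to 0$ --- so you have reproduced the intended argument.

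Where you differ is in rigor rather than strategy. The paper stops at $\Pr\{D<d\}\to 0$, which is convergence in probability, and simply labels the conclusion $D\overset{a.s.}{\geq}d$; it does not carry out the Borel--Cantelli step or construct a common probability space, nor does it address the $N_{\text{B}}$-dependence of $\Phi$. (The same informal identification of a summable tail bound with almost-sure convergence already appears in the statement of Lemma~\ref{lem2}.) Your added discussion of fixing $\rho=c\,\kappa(d)$ with a constant $c>1$ and checking that the hypothesis persists along the sequence is exactly the work needed to make the ``a.s.'' genuine, and it is work the paper omits. So your proof is not a different approach but a more careful version of the paper's.
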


\begin{proof}
From (\ref{F_D}) and (\ref{L3_1}), it is straightforward to see that $\Pr
\left( D<d\right) \rightarrow 0$ as $N_{\text{B}}\rightarrow \infty $.
\end{proof}

Theorem \ref{Th1} shows that for the USK, Eve cannot find a unique solution $%
\mathbf{u}$, since $D$ is almost surely greater than $2$.

We next estimate the \emph{secrecy outage probability} when $N_{\text{B}}$
is finite, defined by%
\begin{equation}
P_{\text{out}}(d)\triangleq \Pr \left\{ D<d\right\} \text{,}  \label{P_out}
\end{equation}%
for any $d\geq 2$.

\begin{theo}
\label{Th2}Let $N_{\min }=\min \left\{ N\text{, }N_{\text{B}}\right\} $,
where $N$ is given in (\ref{N}). If%
\begin{equation}
P_{\text{v}}\geq \varepsilon ^{-2/N_{\min }}\kappa (d)^{2}/\Phi ^{2N_{\text{B%
}}/N_{\text{E}}}
\end{equation}%
and $d\geq 2$, then%
\begin{equation}
P_{\text{out}}(d)<O(\varepsilon )\text{,}
\end{equation}%
for any arbitrarily small $\varepsilon >0$, i.e., ideal secrecy is achieved
with probability $1-O(\varepsilon )$, where $\kappa (d)$ is given in (\ref%
{kappa}) and $\Phi $ is given in (\ref{Pi_SVD}).
\end{theo}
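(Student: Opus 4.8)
The plan is to derive Theorem~\ref{Th2} directly from the finite-$N_{\text{B}}$ bound of Lemma~\ref{lem1} by making a judicious choice of the free parameter $\rho$. Since $P_{\text{out}}(d)=F_{D}(d,P_{\text{v}})$ by definition, it suffices to bound the right-hand side of the second inequality in Lemma~\ref{lem1} and show it is $O(\varepsilon)$.

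First I would set $\rho=\varepsilon^{-1/N_{\min}}\kappa(d)$, so that $\rho^{2}=\varepsilon^{-2/N_{\min}}\kappa(d)^{2}$. With this choice the hypothesis $P_{\text{v}}\geq \varepsilon^{-2/N_{\min}}\kappa(d)^{2}/\Phi^{2N_{\text{B}}/N_{\text{E}}}$ becomes exactly $P_{\text{v}}\geq \rho^{2}/\Phi^{2N_{\text{B}}/N_{\text{E}}}$, which is the power condition required by Lemma~\ref{lem1}. Moreover, for any $\varepsilon\in(0,1)$ we have $\varepsilon^{-1/N_{\min}}>1$, hence $\rho>\kappa(d)$, so the second hypothesis of Lemma~\ref{lem1} also holds. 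Observing that $\rho/\kappa(d)=\varepsilon^{-1/N_{\min}}$, Lemma~\ref{lem1} then yields
\begin{equation}
P_{\text{out}}(d)<\varepsilon^{N_{\text{B}}/N_{\min}}+\Upsilon\!\left(\varepsilon^{-1/N_{\min}}\right).
\end{equation}

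It remains to show that each term is $O(\varepsilon)$ as $\varepsilon\rightarrow 0$, with $N_{\text{B}}$, $N_{\text{E}}$, $N$ and $d$ held fixed. For the first term, since $N_{\min}=\min\{N,N_{\text{B}}\}\leq N_{\text{B}}$ we have $N_{\text{B}}/N_{\min}\geq 1$, so that $\varepsilon^{N_{\text{B}}/N_{\min}}\leq \varepsilon$ for $\varepsilon\in(0,1)$; this term is therefore $O(\varepsilon)$, and it equals $\varepsilon$ exactly when $N_{\min}=N_{\text{B}}$. For the second term I would invoke the asymptotic $\Upsilon(x)=O(e^{-xN})$ from the Remark following Lemma~\ref{lem3}, applied with $x=\varepsilon^{-1/N_{\min}}\rightarrow\infty$. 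Writing $\varepsilon=x^{-N_{\min}}$, the ratio satisfies $\Upsilon(x)/\varepsilon=O\!\left(x^{N_{\min}}e^{-xN}\right)\rightarrow 0$, because the exponential decay of $e^{-xN}$ overwhelms the polynomial growth $x^{N_{\min}}$. Hence $\Upsilon(\varepsilon^{-1/N_{\min}})=o(\varepsilon)$, and combining the two bounds gives $P_{\text{out}}(d)<O(\varepsilon)$, as claimed.

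The argument is essentially mechanical once the correct $\rho$ is identified; the only point requiring care is confirming that the residual $\Upsilon$ term is never the bottleneck. The exponent $N_{\min}$ in the power constraint is precisely what balances the polynomial first term against the negligible exponential tail, so the main \emph{obstacle} is really the bookkeeping of which of $N$ or $N_{\text{B}}$ governs the rate in the two cases $N_{\min}=N$ and $N_{\min}=N_{\text{B}}$. In both cases the first term dominates and dictates the $O(\varepsilon)$ behaviour, so the secrecy outage can indeed be made arbitrarily small by scaling $P_{\text{v}}$ as $\varepsilon^{-2/N_{\min}}$.
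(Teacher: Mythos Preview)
Your proof is correct and follows essentially the same approach as the paper: you set $\rho=\varepsilon^{-1/N_{\min}}\kappa(d)$, verify the hypotheses of Lemma~\ref{lem1}, and then bound the two resulting terms $\varepsilon^{N_{\text{B}}/N_{\min}}\leq\varepsilon$ and $\Upsilon(\varepsilon^{-1/N_{\min}})=o(\varepsilon)$ exactly as the paper does in Appendix~D. Your write-up is in fact slightly more careful than the paper's, since you explicitly check $\rho>\kappa(d)$ and spell out why the exponential tail is negligible.
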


\begin{proof}
See Appendix D.
\end{proof}

Theorem \ref{Th2} shows that for finite $N_{\text{B}}$, the outage of ideal
secrecy can be made arbitrarily small by increasing $P_{\text{v}}$.

\begin{example}
Let us apply Theorem \ref{Th2} to the analysis of a USK scheme with $N_{%
\text{A}}=9$, $N_{\text{B}}=4$, $N_{\text{E}}=8$, $\sigma _{\text{E}}^{2}=0$%
, and%
\begin{equation}
P_{\text{v}}=\varepsilon ^{-2/N_{\min }}\kappa (d)^{2}/\Phi ^{2N_{\text{B}%
}/N_{\text{E}}}\text{.}  \label{Pv_e1}
\end{equation}%
We evaluate the secrecy outage probability in (\ref{P_out}) for the $i^{%
\text{th}}$ channel use. We generate $50000$ pairs of mutually independent
complex gaussian random matrices \{$\mathbf{G},\mathbf{H}$\}. For each pair
of \{$\mathbf{G},\mathbf{H}$\}, we evaluate the corresponding realization $%
\tilde{D}$ of the random variable $D$ by%
\begin{equation}
\tilde{D}\approx \frac{\text{vol}(S_{R_{\max }})}{\text{vol}(\Lambda _{%
\mathbb{C}})}=\left( \frac{R_{\max }(P_{\text{v}})}{r_{\text{eff}}(\Lambda _{%
\mathbb{C}})}\right) ^{2N_{\text{B}}}\text{,}
\end{equation}%
where $r_{\text{eff}}(\Lambda _{\mathbb{C}})$ is given in (\ref{r_eff}), $%
R_{\max }(P_{\text{v}})$ is given in (\ref{R_max}). Based on the corresponding $%
50000$ realizations of $D$, we compute the probability of $D<d$, i.e., $P_{%
\text{out}}(d)$. Fig.$~$2 shows the value of $P_{\text{out}}(d)$ as a
function of $\varepsilon $, with $d=2$ and $d=64^{4}$ (large number), respectively. As
expected, the value of $P_{\text{out}}(d)$ decreases with decreasing $%
\varepsilon $, or equivalently, increasing $P_{\text{v}}$.
\end{example}

\begin{figure}[tbp]
\centering\includegraphics[scale=0.65]{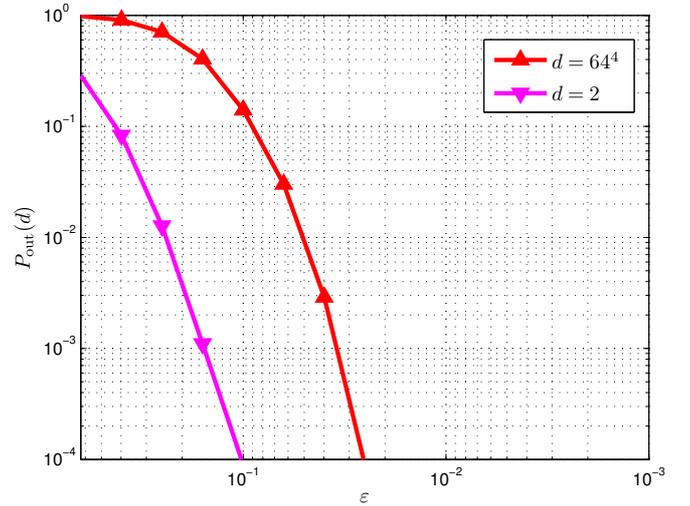}
\caption{$P_{\text{out}}(d)$ vs. $\protect\varepsilon $ with $N_{\text{A}}=9$%
, $N_{\text{B}}=4$ and $N_{\text{E}}=8$. }
\end{figure}

\subsection{Achieving Perfect Secrecy}

From (\ref{SE2}), perfect secrecy requires%
\begin{equation}
H(\mathbf{u|y})=H(\mathbf{u})
\end{equation}%
According to (\ref{uncer}), the problem then reduces to ensuring $%
D_{i}\rightarrow \infty $. From Theorems \ref{Th1} and \ref{Th2}, achieving
perfect secrecy requires infinite AN peak power $P_{\text{v}}$, which is of
theoretical interest only.

\section{Unshared Secret Key Cryptosystem With Finite Constellations}

In this section, we show that the idea of USK can be applied to practical
systems using finite constellations. In this case, we define the concept of
secrecy outage and define a secrecy outage probability. We will later show
how such probability can be made arbitrarily small by considering either
longer blocks of messages or larger constellation size.

\subsection{Encryption}

We consider a sequence of $K$ mutually independent messages $\left\{ \mathbf{%
m}_{l}\right\} _{1}^{K}$, where each one contains $n$ mutually independent
information bits. For each $\mathbf{m}$, Alice maps the corresponding $n$
bits to $N_{\text{B}}$ elements of $\mathbf{{{\mathbf{u}}}}$ for $B$ channel
uses. Each elements of $\mathbf{{{\mathbf{u}}}}$ is uniformly selected from
a $M$-QAM constellation $\mathcal{\tilde{Q}}$, where $\Re (\mathcal{\tilde{Q}%
}\mathbf{)}=\Im (\mathcal{\tilde{Q}}\mathbf{)}=\{0,1,...,\sqrt{M}-1\}$. We
ignore the shifting and scaling operations at Alice to minimize the transmit
power. Consequently, we have%
\begin{equation}
n=BN_{\text{B}}\log _{2}M\text{.}
\end{equation}%
Let $\left\{ \mathbf{{{\mathbf{u}}}}_{i}\right\} _{1}^{B}$ be the block of
transmitted vectors corresponding to one message $\mathbf{m}$.

To secure the total $C=KB$ transmitted vectors $\left\{ \mathbf{u}%
_{j}\right\} _{1}^{C}$, Alice enciphers $\left\{ \mathbf{u}_{j}\right\}
_{1}^{C}$ into the cryptograms $\left\{ \mathbf{y}_{j}\right\} _{1}^{C}$
using a sequence of mutually independent keys $\left\{ \mathbf{v}%
_{j}\right\} _{1}^{C}$. Across the $C$ channel uses, we assume that $\left\{
\mathbf{v}_{j}\right\} _{1}^{C}$ and $\left\{ \mathbf{u}_{j}\right\}
_{1}^{C} $ are mutually independent, and $\left\{ \mathbf{G}_{j}\right\}
_{1}^{C}$ are mutually independent Gaussian random matrices. No assumption
is needed about the statistics of $\left\{ \mathbf{H}_{j}\right\} _{1}^{C}$,
since its realization is known to Alice and Eve.

Since $\left\{ \mathbf{v}_{j}\right\} _{1}^{C}$ and $\left\{ \mathbf{u}%
_{j}\right\} _{1}^{C}$ are mutually independent, using (\ref{SE1}), we only
need to demonstrate the encryption process for one block of transmitted
vectors $\left\{ \mathbf{{{\mathbf{u}}}}_{i}\right\} _{1}^{B}$ corresponding
to a message $\mathbf{m}$.

The encryption process is the same as that of the infinite constellation
case: for the $i^{\text{th}}$ channel use, Alice independently chooses a one
time pad key $\mathbf{v}_{i}$ from the set $S$ in (\ref{set}), and encrypts $%
\mathbf{{{\mathbf{u}}}}_{i}$ to $\mathbf{y}_{i}$ in (\ref{Eve_U}) using $%
\mathbf{v}_{i}$, such that $\mathbf{G}_{i}\mathbf{\mathbf{\mathbf{\mathbf{%
\mathbf{V}}}}}_{1,i}\mathbf{\mathbf{{{\mathbf{u}}}}}_{i}$ is the $k_{i}^{%
\text{th}}$ closest lattice point to $\mathbf{y}_{i}$, within the infinite
lattice%
\begin{equation}
\Lambda _{\mathbb{C},i}=\{\mathbf{G}_{i}\mathbf{\mathbf{\mathbf{\mathbf{%
\mathbf{V}}}}}_{1,i}\mathbf{\mathbf{{{\mathbf{u}}}}},\mathbf{u}\in {\mathbb{Z%
}\left[ i\right] ^{N_{\text{B}}}\}}\text{.}
\end{equation}%
The value of $k_{i}$ ranges from $1$ to $D_{i}$, where%
\begin{equation}
D_{i}=|S_{R_{\max ,i}}\cap \Lambda _{\mathbb{C},i}|\text{,}  \label{Di}
\end{equation}%
and $S_{R_{\max ,i}}$ is a sphere centered at $\mathbf{y}_{i}$ with radius:%
\begin{equation}
R_{\max ,i}(P_{\text{v}})\triangleq \max_{||\mathbf{v}_{i}\mathbf{||}%
^{2}\leq P_{\text{v}}}\left\Vert \mathbf{G}_{i}\mathbf{Z}_{i}\mathbf{v}%
_{i}\right\Vert =\sqrt{\lambda _{\max ,i}P_{\text{v}}}\text{.}
\label{R_maxi}
\end{equation}%
where $\lambda _{\max ,i}$ is the largest eigenvalue of $(\mathbf{G}_{i}%
\mathbf{Z}_{i})^{H}(\mathbf{G}_{i}\mathbf{Z}_{i})$. As shown in Fig.~3, $%
D_{i}$ represents the total number of points within the sphere $S_{R_{\max
,i}}$.

Different from the infinite constellation case, the condition $D_{i}\geq 2$
in (\ref{uncer}) cannot ensure $H(\mathbf{{{\mathbf{u}}}}_{i}\mathbf{|y}%
_{i})>0$. The reason is that Eve knows that $\mathbf{G}_{i}\mathbf{\mathbf{%
\mathbf{\mathbf{\mathbf{V}}}}}_{1,i}\mathbf{u}_{i}$ is a finite lattice
constellation, i.e., a finite subset of $\Lambda _{\mathbb{C},i}$:%
\begin{equation}
\Lambda _{\text{F},i}\triangleq \{\mathbf{G}_{i}\mathbf{\mathbf{\mathbf{%
\mathbf{\mathbf{V}}}}}_{1,i}\mathbf{\mathbf{{{\mathbf{u}}}}},{\mathbf{{%
\mathbf{u}}}}\in \mathcal{\tilde{Q}}{^{N_{\text{B}}}\}}\text{.}
\label{FN_LT}
\end{equation}

Since $k_{i}$ is a function of $\mathbf{v}_{i}$, which is randomly and
independently selected by Alice and is never disclosed to anyone, Eve can
neither know the distribution of $k_{i}$. Given $\mathbf{y}_{i}$, Eve only
knows that $\mathbf{G}_{i}\mathbf{\mathbf{\mathbf{\mathbf{\mathbf{V}}}}}%
_{1,i}\mathbf{\mathbf{{{\mathbf{u}}}}}_{i}\in S_{R_{\max ,i}}\cap \Lambda _{%
\text{F},i}$. Let $L_{i}$ be the cardinality of such choices, i.e.,%
\begin{equation}
L_{i}=|S_{R_{\max ,i}}\cap \Lambda _{\text{F},i}|\text{.}  \label{LII}
\end{equation}%
Since $\Lambda _{\text{F},i}\subset \Lambda _{\mathbb{C},i}$, we have%
\begin{equation}
1\leq L_{i}\leq D_{i}.
\end{equation}%
As shown in Fig.~3, $L_{i}$ represents the number of solid points within the
sphere $S_{R_{\max ,i}}$.

\begin{remark}
Due to the use of finite constellation $\mathcal{\tilde{Q}}^{N_{\text{B}}}$,
we redefine the effective secrecy key $k_{i}$ as $k_{\text{F},i}$, that is, $%
\mathbf{G}_{i}\mathbf{\mathbf{\mathbf{\mathbf{\mathbf{V}}}}}_{1,i}\mathbf{%
\mathbf{{{\mathbf{u}}}}}_{i}$ is the $k_{\text{F},i}^{\text{th}}$ closest
lattice point to $\mathbf{y}_{i}$, within the finite lattice constellation $%
\Lambda _{\text{F},i}$. The corresponding key space size is $L_{i}$ per
channel use.
\end{remark}

\begin{remark}
The practical secrecy scheme \cite{Liu13Letter} is a special case of USK
cryptosystem with $k_{\text{F},i}\geq 2$.
\end{remark}

\subsection{Analyzing Eve's Equivocation}

We then show that Eve's equivocation $H(\left\{ \mathbf{{{\mathbf{u}}}}%
_{i}\right\} _{1}^{B}|\left\{ \mathbf{y}_{i}\right\} _{1}^{B})$ is
determined by $\left\{ L_{i}\right\} _{1}^{B}$. The posterior probability
that Eve obtains $\mathbf{{{\mathbf{u}}}}_{i}$, or equivalently, finds $k_{%
\text{F},i}$, is equal to%
\begin{equation}
\Pr \left\{ \mathbf{{{\mathbf{u}}}}_{i}\mathbf{|y}_{i}\right\} =\Pr \left\{
k_{\text{F},i}\mathbf{|y}_{i}\right\} =\Pr \left\{ \mathbf{{{\mathbf{u}}}}%
_{i}\mathbf{|{{\mathbf{u}}}}_{i}\in \mathcal{U}_{\text{F},i}\right\} \text{.}
\label{PPP_FI}
\end{equation}%
where%
\begin{equation}
\mathcal{U}_{\text{F},i}\triangleq \left\{ \mathbf{u}^{\prime }\text{: }%
\mathbf{G}_{i}\mathbf{\mathbf{\mathbf{\mathbf{\mathbf{V}}}}}_{1,i}\mathbf{%
\mathbf{{{\mathbf{u}}}}}^{\prime }\in S_{R_{\max ,i}}\cap \Lambda _{\text{F}%
,i}\right\} \text{.}  \label{UF}
\end{equation}

Due to the use of uniform constellation $\mathcal{\tilde{Q}}^{N_{\text{B}}}$%
, according to Bayes' theorem, we have%
\begin{equation}
\Pr \left\{ \mathbf{{{\mathbf{u}}}}_{i}\mathbf{|{{\mathbf{u}}}}_{i}\in
\mathcal{U}_{\text{F},i}\right\} =\frac{1}{L_{i}}\text{.}  \label{Post_FI}
\end{equation}

To recover one message $\mathbf{m}$, Eve has to recover all vectors in $%
\left\{ \mathbf{{{\mathbf{u}}}}_{i}\right\} _{1}^{B}$, or equivalently, find
$\left\{ k_{\text{F},i}\right\} _{1}^{B}$. Therefore, Eve's equivocation is
given by%
\begin{equation}
H(\mathbf{m|}\left\{ \mathbf{y}_{i}\right\} _{1}^{B})=H(\left\{ k_{\text{F}%
,i}\right\} _{1}^{B}\mathbf{|}\left\{ \mathbf{y}_{i}\right\}
_{1}^{B})=H(\left\{ \mathbf{{{\mathbf{u}}}}_{i}\right\} _{1}^{B}|\left\{
\mathbf{y}_{i}\right\} _{1}^{B})\text{.}  \label{uncer_FI1}
\end{equation}%
Moreover, since $\mathbf{u}_{i}$ is independent of $\mathbf{u}_{j}$ and $%
\mathbf{y}_{j}$, we have%
\begin{equation}
H(\left\{ \mathbf{{{\mathbf{u}}}}_{i}\right\} _{1}^{B}|\left\{ \mathbf{y}%
_{i}\right\} _{1}^{B})=\sum_{i=1}^{B}H(\mathbf{u}_{i}\mathbf{|y}%
_{i})=\sum_{i=1}^{B}\log L_{i}\text{.}  \label{uncer_FI}
\end{equation}%
\begin{figure}[tbp]
\centering\includegraphics[scale=0.65]{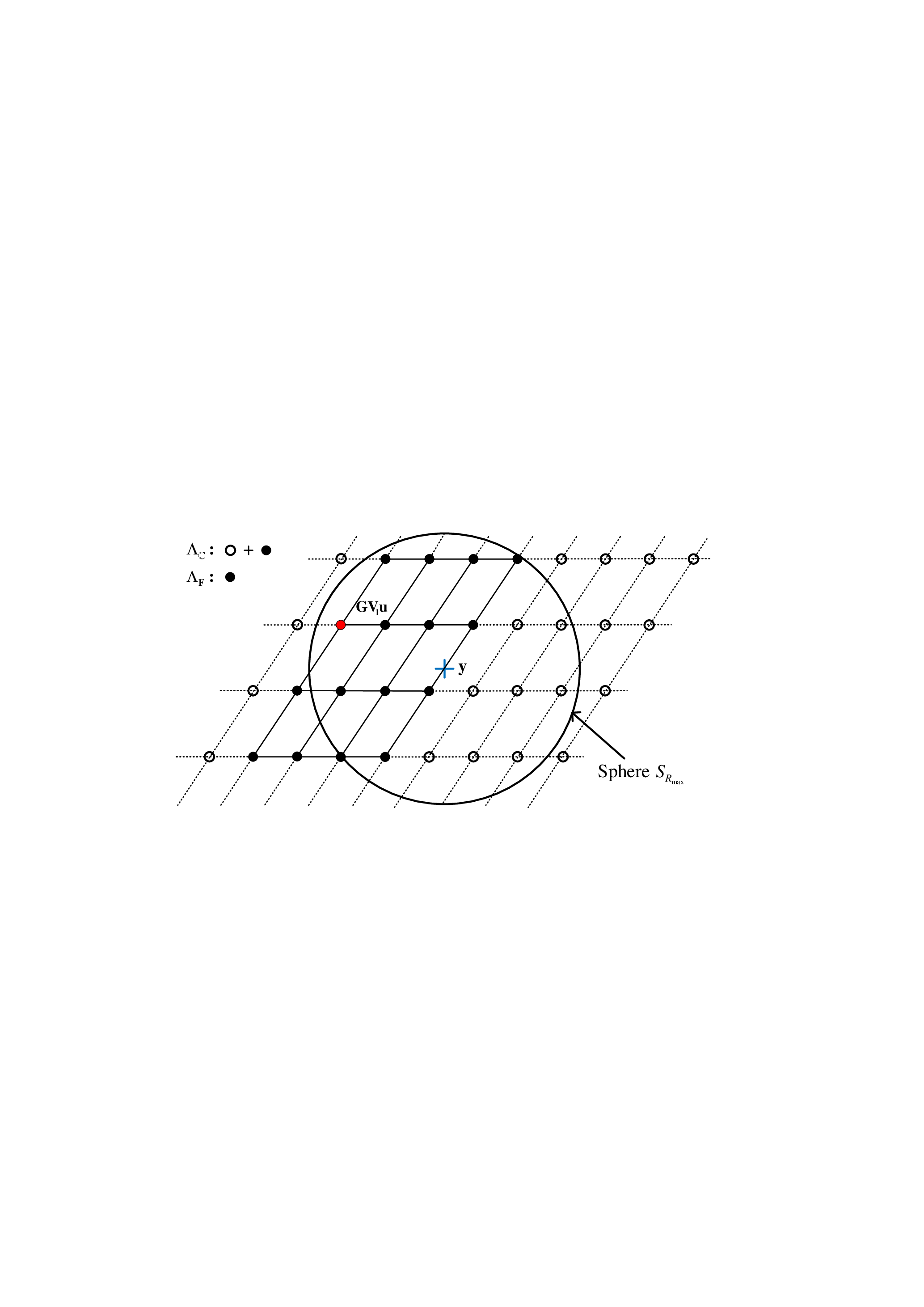}
\caption{ The USK cryptosystem with finite constellations.}
\end{figure}

\subsection{Ideal Secrecy Outage}

Based on (\ref{uncer_FI}), Eve's equivocation is dominated by the values in $%
\left\{ L_{i}\right\} _{1}^{B}$, which are known to Eve. From Alice's
perspective, according to (\ref{LII}) and (\ref{UF}), $L_{i}$ is a function
of $\mathbf{G}_{i}$, thus a random variable. Although Alice cannot know the
exact values in $\left\{ L_{i}\right\} _{1}^{B}$, she may be able to
evaluate the cdf of Eve's equivocation, given by%
\begin{eqnarray}
\Pr \left\{ \sum_{i=1}^{B}\log L_{i}<\log d\right\} &\leq &\Pr \left\{ \log
L_{i}<\log d,1\leq i\leq B\right\}  \notag \\
&=&\Pr \left\{ L_{1}<d\text{, ..., }L_{B}<d\right\}  \notag \\
&\triangleq &P_{\text{F,out}}(d,B)\text{.}  \label{PF_out}
\end{eqnarray}%
where $2\leq d\leq M^{N_{\text{B}}}$.

We refer to the event
\begin{equation}
\sum_{i=1}^{B}\log L_{i}<\log d\text{,}
\end{equation}%
as the \emph{secrecy outage }due to the use of the finite constellation $%
\mathcal{\tilde{Q}}^{N_{\text{B}}}$. We refer to $P_{\text{F,out}}(d,B)$ as
the secrecy outage probability. From (\ref{uncer_FI}) and (\ref{PF_out}), if
$P_{\text{F,out}}(d,B)\rightarrow 0$,%
\begin{equation}
H(\left\{ \mathbf{{{\mathbf{u}}}}_{i}\right\} _{1}^{B}|\left\{ \mathbf{y}%
_{i}\right\} _{1}^{B})=H(\left\{ k_{\text{F},i}\right\} _{1}^{B}\mathbf{|}%
\left\{ \mathbf{y}_{i}\right\} _{1}^{B})\geq \log d\text{.}  \label{target}
\end{equation}

In the next section, we will show that Alice can ensure $P_{\text{F,out}%
}(d,B)\rightarrow 0$ by increasing the message block size $B$ with certain $%
M $ and $P_{\text{v}}$.

\section{The Security of USK with Finite Constellations}

In this section, we show that the USK with the finite constellation $%
\mathcal{\tilde{Q}}^{N_{\text{B}}}$ provides Shannon's ideal secrecy with an
arbitrarily small outage. To prove the main theorems, we first introduce the
following lemma.

We define%
\begin{equation}
\Theta (P_{\text{v}})\triangleq \frac{2R_{\max }(P_{\text{v}})}{\sqrt{M}r_{%
\text{eff}}(\Lambda _{\mathbb{C}})}\text{.}  \label{RRR}
\end{equation}%
where $r_{\text{eff}}(\Lambda _{\mathbb{C}})$ is given in (\ref{r_eff}) and $%
R_{\max }(P_{\text{v}})$ is given in (\ref{R_maxi}). From Alice perspective,
$\Theta (P_{\text{v}})$ is a function of $\mathbf{G}$, thus is a random
variable. Its cdf is bounded by the following lemma.

\begin{lemma}
\label{Th4}%
\begin{align}
& \Pr \left\{ \Theta (P_{\text{v}})<x\right\}  \notag \\
& >\prod\limits_{j=1}^{N_{\text{B}}}B_{\substack{ N_{\text{E}}\left( N_{%
\text{A}}-N_{\text{B}}\right) ,  \\ N_{\text{E}}-j+1}}\left( \frac{N_{\text{E%
}}\left( N_{\text{A}}-N_{\text{B}}\right) g(x,j)}{N_{\text{E}}\left( N_{%
\text{A}}-N_{\text{B}}\right) g(x,j)+N_{\text{E}}-j+1}\right) \text{,}
\label{P_t}
\end{align}%
where%
\begin{equation}
g(x,j)=\frac{x^{2}MN_{\text{B}}(N_{\text{E}}-j+1)}{4\pi eP_{\text{v}}N_{%
\text{E}}\left( N_{\text{A}}-N_{\text{B}}\right) }\text{,}  \label{g_x}
\end{equation}%
and $B_{a\text{,}b}(x)$ is the \emph{regularized incomplete beta} function
\cite{Imbeta}:%
\begin{equation}
B_{a\text{,}b}(x)\triangleq \sum_{j=a}^{a+b-1}\left(
\begin{array}{c}
a+b-1 \\
j%
\end{array}%
\right) x^{j}(1-x)^{a+b-1-j}\text{.}  \label{Beta}
\end{equation}
\end{lemma}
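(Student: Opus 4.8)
The plan is to reduce $\Theta(P_{\text{v}})$ to two \emph{independent} random quantities built from $\mathbf{G}$, replace the top eigenvalue by a trace (Frobenius) term, and then write the resulting probability as a product of gamma-ratio tail probabilities, each of which is exactly a regularized incomplete beta function. First I would exploit rotational invariance: since $\mathbf{V}=[\mathbf{V}_1\ \mathbf{Z}]$ is unitary and the entries of $\mathbf{G}$ are i.i.d. $\mathcal{N}_{\mathbb{C}}(0,1)$, the matrix $\mathbf{GV}=[\mathbf{GV}_1\ \mathbf{GZ}]$ again has i.i.d. $\mathcal{N}_{\mathbb{C}}(0,1)$ entries, so $\mathbf{GV}_1\in\mathbb{C}^{N_{\text{E}}\times N_{\text{B}}}$ and $\mathbf{GZ}\in\mathbb{C}^{N_{\text{E}}\times(N_{\text{A}}-N_{\text{B}})}$ are \emph{independent} complex Gaussian matrices. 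Writing $\lambda_{\max}$ for the largest eigenvalue of $(\mathbf{GZ})^H(\mathbf{GZ})$ and using $r_{\text{eff}}^2=\tfrac{N_{\text{B}}}{\pi e}\,\mathrm{vol}(\Lambda_{\mathbb{C}})^{1/N_{\text{B}}}$ from (\ref{r_eff}) together with (\ref{RRR}) and (\ref{R_maxi}), I obtain
\begin{equation*}
\Theta(P_{\text{v}})^2=\frac{4\lambda_{\max}P_{\text{v}}}{M\,r_{\text{eff}}^2}=\frac{4\pi e\,\lambda_{\max}P_{\text{v}}}{M N_{\text{B}}\,\mathrm{vol}(\Lambda_{\mathbb{C}})^{1/N_{\text{B}}}}.
\end{equation*}
The key simplification is $\lambda_{\max}\le\operatorname{tr}\!\big((\mathbf{GZ})^H(\mathbf{GZ})\big)=\|\mathbf{GZ}\|_F^2=:W$, which only enlarges $\Theta$; hence $\Pr\{\Theta(P_{\text{v}})<x\}\ge\Pr\{W<\gamma\,\mathrm{vol}(\Lambda_{\mathbb{C}})^{1/N_{\text{B}}}\}$, where $\gamma\triangleq\frac{x^2 M N_{\text{B}}}{4\pi e P_{\text{v}}}$.

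Next I would bring in the two distributional facts. Since $\mathbf{GZ}$ has $N_{\text{E}}(N_{\text{A}}-N_{\text{B}})$ i.i.d. $\mathcal{N}_{\mathbb{C}}(0,1)$ entries, $W=\|\mathbf{GZ}\|_F^2$ is $\tfrac{1}{2}\mathcal{X}^2(2N_{\text{E}}(N_{\text{A}}-N_{\text{B}}))$, i.e. Gamma with shape $a\triangleq N_{\text{E}}(N_{\text{A}}-N_{\text{B}})$. For the volume, the Bartlett (Gram--Schmidt/QR) decomposition of the complex Wishart matrix $(\mathbf{GV}_1)^H(\mathbf{GV}_1)$ gives $\mathrm{vol}(\Lambda_{\mathbb{C}})=\prod_{j=1}^{N_{\text{B}}}w_j$, where $w_j$ is the squared length of the $j$th column of $\mathbf{GV}_1$ after projecting off the previous $j-1$ columns; the $w_j$ are \emph{independent}, each $\tfrac{1}{2}\mathcal{X}^2(2(N_{\text{E}}-j+1))$, i.e. Gamma with shape $b_j\triangleq N_{\text{E}}-j+1$, and independent of $W$. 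Raising the event to the $N_{\text{B}}$th power, $\{W<\gamma\,\mathrm{vol}^{1/N_{\text{B}}}\}=\{W^{N_{\text{B}}}<\gamma^{N_{\text{B}}}\prod_j w_j\}$, and the elementary implication $\bigcap_{j=1}^{N_{\text{B}}}\{W<\gamma w_j\}\subseteq\{W^{N_{\text{B}}}<\gamma^{N_{\text{B}}}\prod_j w_j\}$ yields $\Pr\{\Theta<x\}\ge\Pr\big\{\bigcap_j\{W<\gamma w_j\}\big\}$.

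Finally I would handle the dependence and identify the beta factors. Conditioning on $W$, each event $\{W<\gamma w_j\}$ becomes $\{w_j>W/\gamma\}$ with the $w_j$ independent, so the conditional probability of the intersection is $\prod_j h_j(W)$ with $h_j(t)=\Pr\{w_j>t/\gamma\}$ non-increasing in $t$. Because functions that are monotone in the same direction of the \emph{single} variable $W$ are positively correlated, the correlation (association/Chebyshev-sum) inequality gives $\mathrm{E}_W[\prod_j h_j(W)]\ge\prod_j\mathrm{E}_W[h_j(W)]=\prod_j\Pr\{W<\gamma w_j\}$. By the standard gamma-ratio-to-beta identity, for independent $W\sim\mathrm{Gamma}(a)$ and $w_j\sim\mathrm{Gamma}(b_j)$ one has $\tfrac{W}{W+w_j}\sim\mathrm{Beta}(a,b_j)$, so $\Pr\{W<\gamma w_j\}=\Pr\{\tfrac{W}{W+w_j}<\tfrac{\gamma}{1+\gamma}\}=B_{a,b_j}\!\big(\tfrac{\gamma}{1+\gamma}\big)$; substituting $\gamma b_j=a\,g(x,j)$ from (\ref{g_x}) turns $\tfrac{\gamma}{1+\gamma}$ into exactly the argument $\frac{a\,g(x,j)}{a\,g(x,j)+b_j}$ in (\ref{P_t}), completing the product. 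The strict inequality in (\ref{P_t}) then follows from the strict, positive-probability slack in $\lambda_{\max}<\operatorname{tr}((\mathbf{GZ})^H(\mathbf{GZ}))$ a.s. and in the product-of-inequalities inclusion.

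I expect the \textbf{main obstacle} to be the final two steps: one must (i) justify the Bartlett factorization and the exact Gamma shapes $b_j=N_{\text{E}}-j+1$ for the complex Wishart determinant, and (ii) legitimately convert the probability of an intersection of \emph{dependent} events (all sharing $W$) into the advertised \emph{product} of beta functions. The clean route to (ii) is the positive-association argument above; a union-type bound would push the inequality the wrong way, so recognizing that every $h_j$ is monotone in the common variable $W$ is the crux of the argument.
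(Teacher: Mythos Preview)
Your argument follows the same skeleton as the paper's proof: bound $\lambda_{\max}$ by $\|\mathbf{GZ}\|_F^2$, use the Bartlett/Wishart factorization of $\det((\mathbf{GV}_1)^H\mathbf{GV}_1)$ into independent $\tfrac{1}{2}\mathcal{X}^2(2(N_{\text{E}}-j+1))$ variables, reduce to a product of gamma-ratio probabilities, and identify each with the regularized incomplete beta (the paper routes this through the $\mathcal{F}$-distribution, you go straight to the Beta via $W/(W+w_j)$, which is equivalent). Two small but genuine differences are worth noting. First, the paper inserts an extra geometric--harmonic mean step to pass from $(\prod_j w_j)^{1/N_{\text{B}}}$ to a sum of ratios before splitting; your direct inclusion $\bigcap_j\{W<\gamma w_j\}\subseteq\{W^{N_{\text{B}}}<\gamma^{N_{\text{B}}}\prod_j w_j\}$ is shorter and lands on exactly the same per-$j$ events. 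Second, where the paper factorizes $\Pr\{\sum_j A_j<C\}$ by asserting that the $A_j=W/w_j$ are ``mutually independent'' (they are not, since they all share the same $W$), your conditioning-on-$W$ plus positive-association (Chebyshev/FKG for comonotone functions of a single variable) argument is the correct way to obtain $\Pr\{\bigcap_j\{W<\gamma w_j\}\}\ge\prod_j\Pr\{W<\gamma w_j\}$. So the proposal is sound and, on the dependence point, actually closes a gap in the paper's own write-up.
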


\begin{proof}
See Appendix E.
\end{proof}

\subsection{Achieving Ideal Secrecy}

As shown in (\ref{SE1}) and\ (\ref{uncer_FI}), ideal secrecy is achieved
when $\sum_{i=1}^{B}\log L_{i}>0$. From (\ref{PF_out}), the problem then
reduces to ensuring%
\begin{equation}
P_{\text{F,out}}(d,B)\rightarrow 0\text{,}  \label{Li}
\end{equation}%
for any $d\geq 2$. Lemma \ref{Th4} enables us to prove the following theorem.

\begin{theo}
\label{Th3}If $\varepsilon <1$, $d\geq 2$, $P_{\text{v}}=\varepsilon
^{-2/N_{\min }}\kappa (d)^{2}/\Phi ^{2N_{\text{B}}/N_{\text{E}}}$, and $%
M\geq \varepsilon ^{-3-2/N_{\min }}\kappa (d)^{2}$, then%
\begin{equation}
P_{\text{F,out}}(d,B)<O(\varepsilon ^{B})\text{,}
\end{equation}%
where $\kappa (d)$ is given in (\ref{kappa}) and $\Phi $ is given in (\ref%
{Pi_SVD}), i.e., ideal secrecy is achieved with probability $1-O(\varepsilon
^{B})$.
\end{theo}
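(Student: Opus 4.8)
The plan is to strip the $B$-fold joint outage down to a single channel use by independence, and then to control that single use with Lemma~\ref{Th4}. First I would note that, from Alice's viewpoint, $L_i$ in (\ref{LII}) is determined by the realization of $\mathbf{G}_i$ (the realization of $\mathbf{H}_i$ is known and contributes no randomness), and that the matrices $\{\mathbf{G}_j\}$ are mutually independent; hence the indicators $\mathbf{1}\{L_i<d\}$ are i.i.d. Therefore the joint secrecy outage (\ref{PF_out}) factorizes, $P_{\text{F,out}}(d,B)=\prod_{i=1}^{B}\Pr\{L_i<d\}=(\Pr\{L<d\})^{B}$, and it suffices to prove the per-use bound $\Pr\{L<d\}<O(\varepsilon)$, since raising it to the $B$-th power yields the claimed $O(\varepsilon^{B})$.

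Second, I would separate the per-use outage into an infinite-lattice count and a finite-truncation loss. Let $D$ be the number of points of $\Lambda_{\mathbb{C}}$ inside the decoding sphere $S_{R_{\max}}$ and $L$ the number of points of the finite constellation $\Lambda_{\text{F}}\subset\Lambda_{\mathbb{C}}$ inside the same sphere. A geometric counting argument should show that, provided the sphere is small relative to the constellation support --- precisely, provided $\Theta(P_{\text{v}})$ in (\ref{RRR}) falls below a fixed, $\varepsilon$-independent threshold $x_0$ --- the truncation $\Lambda_{\mathbb{C}}\to\Lambda_{\text{F}}$ retains at least a constant fraction $c=c(N_{\text{B}})$ of the enclosed points, so that $L\ge cD$. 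Thus $\{\Theta(P_{\text{v}})<x_0\}\cap\{D\ge d/c\}\subseteq\{L\ge d\}$, giving
\[
\Pr\{L<d\}\le\Pr\{\Theta(P_{\text{v}})\ge x_0\}+\Pr\{D<d/c\}.
\]
The second term is exactly the infinite-constellation outage already handled: with $P_{\text{v}}=\varepsilon^{-2/N_{\min}}\kappa(d)^2/\Phi^{2N_{\text{B}}/N_{\text{E}}}$, Theorem~\ref{Th2} (applied with the threshold $d/c$, which only rescales $\kappa$ and hence $P_{\text{v}}$ by an $\varepsilon$-independent factor absorbed into the constant) bounds it by $O(\varepsilon)$.

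Third, I would bound the remaining term $\Pr\{\Theta(P_{\text{v}})\ge x_0\}=1-\Pr\{\Theta(P_{\text{v}})<x_0\}$ by Lemma~\ref{Th4}. The mechanism is the scaling of $g(x,j)$ in (\ref{g_x}): substituting the prescribed $P_{\text{v}}$ together with $M\ge\varepsilon^{-3-2/N_{\min}}\kappa(d)^2$ makes the ratio $M/P_{\text{v}}$ of order $\varepsilon^{-3}$, so each $g(x_0,j)$ grows like $\varepsilon^{-3}$ and the argument of every regularized incomplete beta factor in (\ref{P_t}) is driven to within $O(\varepsilon^{3})$ of $1$. Estimating the upper tail of each $B_{a,b}(\cdot)$ and multiplying the $N_{\text{B}}$ factors should then give $\Pr\{\Theta(P_{\text{v}})<x_0\}>1-O(\varepsilon)$, i.e. $\Pr\{\Theta(P_{\text{v}})\ge x_0\}<O(\varepsilon)$; combined with the previous step this yields $\Pr\{L<d\}<O(\varepsilon)$.

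The hard part will be this last estimate. I expect the main obstacle to be an explicit upper bound on $1-B_{a,b}(t)$ as $t\uparrow1$ that is uniform in $j=1,\dots,N_{\text{B}}$ and tracks the precise power of $\varepsilon$, so that the product over $j$ loses only $O(\varepsilon)$ and so that the chosen exponents ($-3$ and $-2/N_{\min}$) are seen to be exactly the ones that balance the beta arguments against the threshold $x_0$. A secondary difficulty is making the geometric step rigorous: the worst case is when the true constellation point sits near a corner of $\mathcal{\tilde{Q}}^{N_{\text{B}}}$, so one must verify that the retained fraction $c(N_{\text{B}})$ is genuinely $\varepsilon$-independent and that the threshold $x_0$ ensuring $L\ge cD$ is a pure dimension constant. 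With both in hand, the three steps combine to give $P_{\text{F,out}}(d,B)<O(\varepsilon^{B})$.
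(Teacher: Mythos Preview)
Your first and third steps are essentially the paper's, but the second step has a genuine gap. The deterministic inclusion $\{\Theta(P_{\text{v}})<x_0\}\cap\{D\ge d/c\}\subseteq\{L\ge d\}$ is false for any fixed $x_0$ and $c$: take $\mathbf{u}$ at a corner of $\mathcal{\tilde Q}^{N_{\text B}}$ and let the artificial noise $\mathbf{GZv}$ attain its maximal length $R_{\max}$ pointing out of the box. Then $\mathbf y$ lies outside $\Lambda_{\text F}$ at distance $R_{\max}$ from the corner, the sphere $S_{R_{\max}}$ just touches that corner, and one can have $L=1$ while $D$ is as large as you like (make $R_{\max}/r_{\text{eff}}(\Lambda_{\mathbb C})$ big and $\sqrt{M}$ bigger still, so $\Theta$ is tiny). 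No $\varepsilon$-independent retained fraction exists.

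The paper's fix uses two ingredients you omitted. First, it exploits the \emph{uniform} law of $\mathbf{u}$ over $\mathcal{\tilde Q}^{N_{\text B}}$: if $\mathbf{GV}_1\mathbf{u}$ lies in a concentric ``core'' of radius $\sqrt{M}\,r_{\text{eff}}(\Lambda_{\mathbb C})-2R_{\max}$, a triangle-inequality argument gives $L=D$ exactly, and the probability of missing the core is $1-(1-\Theta)^{2N_{\text B}}$. Second, it takes the threshold on $\Theta$ to be $\varepsilon$ itself, not a fixed $x_0$, so this boundary probability is $1-(1-\varepsilon)^{2N_{\text B}}=O(\varepsilon)$. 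The exponent $-3$ in $M\ge\varepsilon^{-3-2/N_{\min}}\kappa(d)^2$ is precisely what makes $g(\varepsilon,j)$ in (\ref{g_x}) of order $\varepsilon^{-1}$ and hence, via Lemma~\ref{Th4}, $\Pr\{\Theta\ge\varepsilon\}=O(\varepsilon^{N})$. With your fixed $x_0$, the core-miss probability would be $O(x_0)$, a constant, and the per-use bound would stall at $O(1)$.

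A minor point: the paper splits $\Pr\{L<d\}$ first on $\{D\ge d\}$ (invoking Theorem~\ref{Th2} directly with the stated $P_{\text v}$) and then bounds $\Pr\{L<D\}$, rather than bounding $\Pr\{D<d/c\}$; this avoids the rescaling of $\kappa$ that your route requires.
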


\begin{proof}
See Appendix F.
\end{proof}

Theorem \ref{Th3} shows that for finite $N_{\text{B}}$ and finite
constellation $\mathcal{\tilde{Q}}^{N_{\text{B}}}$, the ideal secrecy outage
can be made arbitrarily small. Given a desired pair $\left\{ \varepsilon
,d\right\} $, Alice can easily compute the required values of $P_{\text{v}}$
and $M$ to realize the USK cryptosystem.

\begin{example}
We consider a USK scheme with $N_{\text{A}}=4$, $N_{\text{B}}=2$, $N_{\text{E%
}}=3$ and $\sigma _{\text{E}}^{2}=0$. To apply Theorem \ref{Th3}, we fix $d=2
$ and consider two cases where $\varepsilon =0.3981$ and $0.1990$. The
conditions in Theorem \ref{Th3} then reduce to%
\begin{eqnarray}
P_{\text{v}} &=&1.8306\text{ and }M\geq 15.9659\text{, }\ \text{for }%
\varepsilon =0.3981\text{,}  \notag \\
P_{\text{v}} &=&3.6620\text{ and }M\geq 255.7297\text{, for }\varepsilon
=0.1990\text{.}
\end{eqnarray}%
Fig.~4 compares the value of $P_{\text{F,out}}(2,B)$ as a function of $B$.
Note that $P_{\text{F,out}}(2,B)$ can be written as%
\begin{equation}
\Pr \left\{ L_{1}=1\text{, ..., }L_{B}=1\right\} =\Pr \left\{
\sum_{i=1}^{B}\log L_{i}=0\right\} \text{.}
\end{equation}%
We observe that $P_{\text{F,out}}(2,B)=4.6250\times 10^{-4}$ when $P_{\text{v%
}}=3.6620$, $M=256$, and $B=1$. It confirms that the secrecy outage
probability can be made arbitrarily small by increasing $P_{\text{v}}$ and $M
$. Meanwhile, we observe that the secrecy outage probability decreases
exponentially with $B$.
\end{example}

\begin{remark}
For the finite constellation case, the value of target equivocation at Eve
is given by $\log d$ in (\ref{target}). Note that this is not easily
computable for the infinite constellation case according to (\ref{uncer}).
\end{remark}

\begin{figure}[tbp]
\centering\includegraphics[scale=0.65]{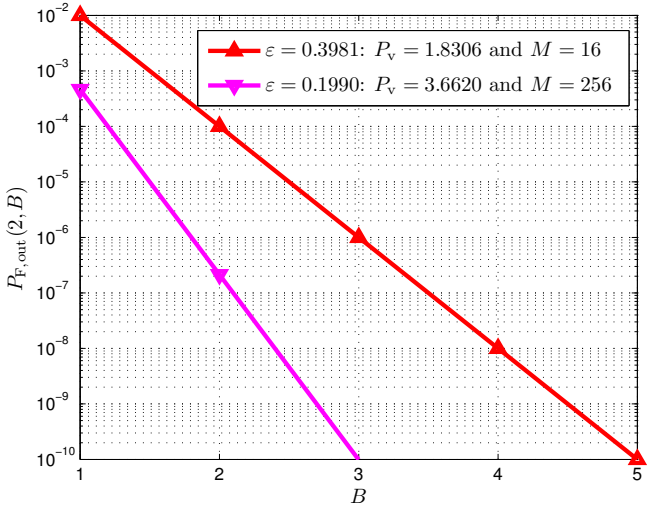}
\caption{$P_{\text{F,out}}(2,B)$ vs. $M$ and $B$ with $N_{\text{A}}=4$, $N_{%
\text{B}}=2$ and $N_{\text{E}}=3$. }
\end{figure}

\subsection{Peak AN-to-Signal Power Ratio}

By shifting and scaling, $\mathbf{u}\in \mathcal{\tilde{Q}}^{N_{\text{B}}}$
can be converted into a regular $M$-QAM symbol $\mathbf{\bar{u}}\in \mathcal{%
Q}^{N_{\text{B}}}$. To measure the power efficiency of the proposed USK
cryptosystem, we define%
\begin{equation}
r\triangleq \frac{P_{\text{v}}}{\text{E}(||\mathbf{V}_{1}\mathbf{\bar{u}||}%
^{2})}\text{,}  \label{r_def}
\end{equation}%
as the ratio of the peak AN power $P_{\text{v}}$ and the average transmitted
signal power.

Since%
\begin{equation}
\text{E}(||\mathbf{V}_{1}\mathbf{\bar{u}||}^{2})=\text{E}(||\mathbf{\bar{u}||%
}^{2})=\frac{2\left( M-1\right) N_{\text{B}}}{3}\text{,}
\end{equation}%
the corresponding ratio as a function of $P_{\text{v}}$ is given by%
\begin{equation}
r=\frac{3P_{\text{v}}}{2\left( M-1\right) N_{\text{B}}}\text{.}
\label{P_ratio}
\end{equation}

\begin{example}
Under the same setting of Example 2, if $M=256$, $r=1.08\%$. We see
that the proposed USK cryptosystem is very practical, since it
requires a very small proportion of the total transmission power.
Note that the value of $r$ can be further reduced by increasing the
constellation size $M$.
\end{example}

\section{Discussions}

\subsection{USK Cryptosystems vs. Previous AN based Schemes}

The existing AN based security schemes \cite{Goel08,Zhou10,XZhang13}
leverage infinite-length wiretap codes, where the aim is to achieve strong
secrecy.

In contrast, the proposed USK cryptosystem is valid for any coded/uncoded
MIMO with finite block length and QAM signalling. Our scheme achieves
Shannon's ideal secrecy with an arbitrarily small outage probability.

\subsection{Extension to the Case of $N_{\text{E}}\geq N_{\text{A}}$}

The constraint $N_{\text{E}}<N_{\text{A}}$ is a common assumption that
appears in the vast literature on AN based schemes \cite%
{Goel08,Zhou10,XZhang13}. Under this condition, we have shown the existence
of an unshared secret key cryptosystem which provides Shannon's ideal
secrecy.

If $N_{\text{E}}\geq N_{\text{A}}$, $\mathbf{G}$ has a left inverse, denoted
by $\mathbf{G}^{\dag }$, then Eve can remove the unshared secret key $%
\mathbf{v}$ by multiplying $\mathbf{y}$ by $\mathbf{{{\mathbf{W}}}=HG{^{\dag
}}}$, i.e.,%
\begin{equation}
\mathbf{\mathbf{W}\mathbf{y=}H\mathbf{V}}_{1}\mathbf{{{\mathbf{u}}}}+\mathbf{%
{{\mathbf{W}}}n}_{\text{E}}\text{.}
\end{equation}

We can show that this attack amplifies Eve's channel noise greatly.
Consequently, $\mathbf{n}_{\text{E}}$ takes the role of the unshared secret
key. We can show that with certain amount of $\sigma _{\text{E}}^{2}$, ideal
secrecy is achievable. This result will be reported in our next paper.

\section{Conclusions}

We have exploited the role that artificial noise plays in physical layer
security to show that it can be used as an unshared one-time pad secret key.
The proposed unshared secret key (USK) cryptosystem with an infinite lattice
input alphabet provides Shannon's ideal secrecy and perfect secrecy by
tuning the power allocated to the artificial noise component. Moreover,
unlike the traditional AN technique, this USK system can be applied to
practical systems using finite lattice constellations. We have shown that
ideal secrecy can be obtained with an arbitrarily small outage probability.
Our results provide analytical insights relating cryptography and physical
layer security on a fundamental level. Future work will generalize USK to
relaying networks.

\section*{Acknowledgment}

We thank Professor Terence Tao for his constructive comments and for
suggesting references, and Professor Joseph Jean Boutros for his helpful
discussions. We thank Professor Ram Zamir for pointing out the boundary
effect of finite constellations.

\section*{Appendix}

\subsection{Proof of Lemma 1}

Recalling that%
\begin{equation}
\Delta (d)=\frac{\kappa (d)^{2N_{\text{E}}}|\det ((\mathbf{GV}_{1}\mathbf{)}%
^{H}(\mathbf{GV}_{1}\mathbf{))|}}{P_{\text{v}}^{N_{\text{E}}}}\text{.}
\label{Pc_upp}
\end{equation}

From Alice's perspective, $\mathbf{G}$ is a complex Gaussian random matrix.
The matrix $\mathbf{V}_{1}$ with orthonormal columns is known. According to
\cite{Lukacs54}, $\mathbf{GV}_{1}$ a Gaussian random matrix with i.i.d.
elements. Moreover, $|\det ((\mathbf{GV}_{1}\mathbf{)}^{H}(\mathbf{GV}_{1}%
\mathbf{))|}$ can be expressed as the product of independent Chi-squared
variables \cite{Verdu04}:%
\begin{equation}
|\det ((\mathbf{GV}_{1}\mathbf{)}^{H}(\mathbf{GV}_{1}\mathbf{))|}%
=\prod_{i=1}^{N_{\text{B}}}\frac{1}{2}\mathcal{X}^{2}\left( 2(N_{\text{E}%
}-i+1)\right) \text{.}  \label{r1}
\end{equation}

Using the properties of the Chi-squared distribution and central limit
theorem, as $N_{\text{B}}\rightarrow \infty $, we have%
\begin{equation}
\frac{\sum\limits_{i=1}^{N_{\text{B}}}\log \mathcal{X}^{2}\left( 2(N_{\text{E%
}}-i+1)\right) -A}{\sqrt{V}}\overset{a.s.}{\rightarrow }\mathcal{N}(0,1)%
\text{,}  \label{clt}
\end{equation}%
where%
\begin{equation*}
A=\sum\limits_{i=1}^{N_{\text{B}}}\text{E}\left( \log \mathcal{X}^{2}\left(
2(N_{\text{E}}-i+1)\right) \right) \text{,}
\end{equation*}%
\begin{equation*}
V=\sum\limits_{i=1}^{N_{\text{B}}}\text{Var}\left( \log \mathcal{X}%
^{2}\left( 2(N_{\text{E}}-i+1)\right) \right) \text{.}
\end{equation*}

Using the properties of Log Chi-squared distributions \cite{Lee12}, we have%
\begin{equation*}
A=\sum\limits_{k=N_{\text{E}}-N_{\text{B}}+1}^{N_{\text{E}}}\left( \log
2+\psi (k)\right) \text{,}
\end{equation*}%
\begin{equation*}
V=\sum\limits_{k=N_{\text{E}}-N_{\text{B}}+1}^{N_{\text{E}}}\psi _{1}(k)%
\text{,}
\end{equation*}%
where $\psi (x)=\frac{d}{dx}\log \Gamma (x)$ is the \emph{digamma} function,
and $\psi _{1}(x)=\frac{d^{2}}{dx^{2}}\log \Gamma (x)$ is the \emph{trigamma}
function.

Informally, we may write (\ref{r1}) and (\ref{clt}) as%
\begin{equation}
|\det ((\mathbf{GV}_{1}\mathbf{)}^{H}(\mathbf{GV}_{1}\mathbf{))|}\approx
2^{-N_{\text{B}}}e^{A+\mathcal{N}(0,\text{ }V)}\text{.}  \label{33}
\end{equation}

According to (\ref{33}), as $N_{\text{B}}\rightarrow \infty $, $\Delta (d)$
converges to the random variable $\Omega $:%
\begin{equation}
\Omega \triangleq \frac{\kappa (d)^{2N_{\text{E}}}\exp \left( A+\mathcal{N}%
\left( 0,V\right) \right) }{2^{N_{\text{B}}}P_{\text{v}}^{N_{\text{E}}}}%
\text{.}  \label{Omega_g}
\end{equation}

To simplify the expressions of $A$ and $V$, we use the following
approximations \cite{Lee12}:%
\begin{eqnarray}
\psi (k) &\approx &\log k-1/(2k)\text{,}  \notag \\
\psi _{1}(k) &\approx &1/k\text{.}  \label{psi_def}
\end{eqnarray}%
Then, we have%
\begin{equation}
V\leq \sum\limits_{i=1}^{N_{\text{B}}}\frac{1}{k}\leq \log N_{\text{B}%
}+\varsigma <\log 2N_{\text{B}}\text{,}  \label{V1V2}
\end{equation}%
where $\varsigma $ is Euler--Mascheroni constant. Similarly, we have%
\begin{eqnarray}
A &=&\sum\limits_{k=N_{\text{E}}-N_{\text{B}}+1}^{N_{\text{E}}}\left( \log
2+\log k-\frac{1}{2k}\right)  \notag \\
&<&N_{\text{B}}\log 2+\log \Phi ^{-2N_{\text{B}}}\text{,}  \label{AA}
\end{eqnarray}%
where%
\begin{equation}
\Phi =\left[ \frac{(N_{\text{E}}-N_{\text{B}})!}{N_{\text{E}}!}\right] ^{%
\frac{1}{2N_{\text{B}}}}\text{.}
\end{equation}

From (\ref{AA}) and (\ref{Omega_g}), $\Omega $ can be upper bounded by%
\begin{equation}
\Omega <\frac{\kappa (d)^{2N_{\text{E}}}\exp \left( \mathcal{N}\left(
0,V\right) \right) }{\Phi ^{2N_{\text{B}}}P_{\text{v}}^{N_{\text{E}}}}\text{.%
}  \label{Omeg1}
\end{equation}

Recall that $N_{\text{E}}\geq N_{\text{B}}$. By substituting $P_{\text{v}%
}\geq \rho ^{2}/\Phi ^{2N_{\text{B}}/N_{\text{E}}}$ and $\rho >\kappa (d)$
to the right side of (\ref{Omeg1}), we have%
\begin{equation}
\Omega <\frac{\exp \left( \mathcal{N}\left( 0,V\right) \right) }{\left( \rho
/\kappa (d)\right) ^{2N_{\text{E}}}}\leq \frac{\exp \left( \mathcal{N}\left(
0,V\right) \right) }{\left( \rho /\kappa (d)\right) ^{2N_{\text{B}}}}%
\triangleq \Omega _{\text{UB}}\text{,}
\end{equation}%
and%
\begin{align}
& \text{ \ \ }\Pr \left\{ \Delta (d)>\left( \rho /\kappa (d)\right) ^{-N_{%
\text{B}}}\right\}  \notag \\
& <\Pr \left\{ \Omega _{\text{UB}}>\left( \rho /\kappa (d)\right) ^{-N_{%
\text{B}}}\right\}  \notag \\
& =\Pr \left\{ \mathcal{N}\left( 0,V\right) >N_{\text{B}}\log \left( \rho
/\kappa (d)\right) \right\}  \notag \\
& <1/2\exp \left( -\frac{N_{\text{B}}^{2}\log ^{2}\left( \rho /\kappa
(d)\right) }{2V}\right)  \notag \\
& \overset{a}{<}1/2\exp \left( -\frac{N_{\text{B}}^{2}\log ^{2}\left( \rho
/\kappa (d)\right) }{2\log 2N_{\text{B}}}\right)  \notag \\
& =O\left( \left( \rho /\kappa (d)\right) ^{-N_{\text{B}}}\right) \text{,}
\label{Gauss_upp}
\end{align}%
where ($a$) holds because of (\ref{V1V2}).

From (\ref{Gauss_upp}) and (\ref{Pc_upp}), if $\rho >\kappa (d)$, as $N_{%
\text{B}}\rightarrow \infty $, we have $\Delta (d)\overset{a.s.}{\rightarrow
}0$.\QEDA
\vspace{-5 mm}

\subsection{Proof of Lemma 2}

We recall (\ref{Pc_upp}) and (\ref{r1}) and consider the random variable%
\begin{equation}
\Psi \triangleq \prod_{i=1}^{N_{\text{B}}}\frac{\mathcal{X}^{2}\left( 2(N_{%
\text{E}}-i+1)\right) }{2(N_{\text{E}}-i+1)}\text{.}
\end{equation}%
\

Recalling that $N_{\text{E}}\geq N_{\text{B}}$. By substituting $\Psi $, $P_{%
\text{v}}\geq \rho ^{2}/\Phi ^{2N_{\text{B}}/N_{\text{E}}}$ and $\rho
>\kappa (d)$ to the right side of (\ref{Pc_upp}), we have%
\begin{eqnarray}
\Delta (d) &=&\left( \rho /\kappa (d)\right) ^{-2N_{\text{E}}}\Psi  \notag \\
&\leq &\left( \rho /\kappa (d)\right) ^{-2N_{\text{B}}}\Psi \text{.}
\end{eqnarray}

Consequently, we obtain%
\begin{align}
& \text{ \ \ }\Pr \left\{ \Delta (d)>\left( \rho /\kappa (d)\right) ^{-N_{%
\text{B}}}\right\}  \notag \\
& \leq \Pr \left\{ \Psi \left( \rho /\kappa (d)\right) ^{-2N_{\text{B}%
}}>\left( \rho /\kappa (d)\right) ^{-N_{\text{B}}}\right\}  \notag \\
& =\Pr \left\{ \Psi >(\rho /\kappa (d))^{N_{\text{B}}}\right\}  \notag \\
& \overset{a}{\leq }\Pr \left\{ \sum_{i=1}^{N_{\text{B}}}\frac{\mathcal{X}%
^{2}\left( 2(N_{\text{E}}-i+1)\right) }{2(N_{\text{E}}-i+1)}>N_{\text{B}%
}\rho /\kappa (d)\right\}  \notag \\
& <\sum_{i=1}^{N_{\text{B}}}\Pr \left\{ \mathcal{X}^{2}\left( 2(N_{\text{E}%
}-i+1)\right) \geq 2(N_{\text{E}}-i+1)\rho /\kappa (d)\right\}  \notag \\
& \leq \sum_{i=1}^{N_{\text{B}}}\left( e^{1-\rho /\kappa (d)}\rho /\kappa
(d)\right) ^{N_{\text{E}}-i+1}  \notag \\
& \triangleq \Upsilon (\rho /\kappa (d))\text{,}
\end{align}%
where ($a$) holds due to the inequality of arithmetic and geometric means.%
\QEDA\vspace{-5mm}

\subsection{Proof of Lemma 3}

We pick an element $\mathbf{v}_{0}$ from $\mathcal{S}$ with $||\mathbf{v}%
_{0}||^{2}=P_{\text{v}}$. Suppose that $\mathbf{v}_{0}\in S_{k_{0}}$, where $%
k_{0}$ is the corresponding effective secret key. Since $D\geq k_{0}$, we
have%
\begin{equation}
F_{D}(d,P_{\text{v}})=\Pr \left\{ D<d\right\} <\Pr \left\{ k_{0}\leq
d\right\} \text{.}  \label{l1}
\end{equation}%
The problem then reduces to evaluating $\Pr \left\{ k_{0}\leq d\right\} $.

Let $\mathcal{S}_{R}$ be a sphere of radius $R\leq R_{\max }(P_{\text{v}})$
centered at $\mathbf{y}$, where vol$(\mathcal{S}_{R})=d\cdot $vol$(\Lambda _{%
\mathbb{C}})$ (see Fig.~\ref{fig_k_th_point}). Let $K$ be the number of the
points in $\mathcal{S}_{R}\cap \Lambda _{\mathbb{C}}$. We have%
\begin{equation}
K\approx \frac{\text{vol}(\mathcal{S}_{R})}{\text{vol}(\Lambda _{\mathbb{C}})%
}=d\text{.}  \label{K_d}
\end{equation}

If $\mathbf{GV}_{1}\mathbf{u}\in \mathcal{S}_{R}$, we have $k_{0}\leq d$,
and vice versa. Thus, the two events are equivalent, i.e.,%
\begin{equation}
\Pr \left\{ k_{0}\leq d\right\} =\Pr \left\{ \mathbf{GV}_{1}\mathbf{u}\in
\mathcal{S}_{R}\right\} \text{.}  \label{l2}
\end{equation}

Let $\mathcal{S}_{R}^{\prime }$ be a sphere with the same radius $R$
centered at $\mathbf{GV}_{1}\mathbf{u}$. If $\mathbf{GV}_{1}\mathbf{u}\in
\mathcal{S}_{R}$, then $\mathbf{y}\in \mathcal{S}_{R}^{\prime }$, and vice
versa. Thus, the two events are equivalent, i.e.,%
\begin{equation}
\Pr \left\{ \mathbf{GV}_{1}\mathbf{u}\in \mathcal{S}_{R}\right\} =\Pr
\left\{ \mathbf{y}\in \mathcal{S}_{R}^{\prime }\right\} \text{.}  \label{l3}
\end{equation}

From (\ref{l1}), (\ref{l2}) and (\ref{l3}), we have%
\begin{eqnarray}
&&F_{D}(d,P_{\text{v}})  \notag \\
&<&\Pr \left\{ \mathbf{y}\in \mathcal{S}_{R}^{\prime }\right\}  \notag \\
&=&\Pr \left\{ \mathbf{y}\in \mathcal{S}_{R}^{\prime }|\text{vol}(\mathcal{S}%
_{R}^{\prime })\leq C\right\} \cdot \Pr \left\{ \text{vol}(\mathcal{S}%
_{R}^{\prime })\leq C\right\} +  \notag \\
&&\Pr \left\{ \mathbf{y}\in \mathcal{S}_{R}^{\prime }|\text{vol}(\mathcal{S}%
_{R}^{\prime })>C\right\} \cdot \Pr \left\{ \text{vol}(\mathcal{S}%
_{R}^{\prime })>C\right\}  \notag \\
&<&\Pr \left\{ \mathbf{y}\in \mathcal{S}_{R}^{\prime }|\text{vol}(\mathcal{S}%
_{R}^{\prime })\leq C\right\} +\Pr \left\{ \text{vol}(\mathcal{S}%
_{R}^{\prime })>C\right\}\label{Pd}
\end{eqnarray}%
where $C$ is a positive number.

We then evaluate the two terms in (\ref{Pd}) separately. We use the same
settings as Lemmas \ref{lem2} and \ref{lem3}, i.e., $P_{\text{v}}\geq \rho
^{2}/\Phi ^{2N_{\text{B}}/N_{\text{E}}}$, $\rho >\kappa (d)$. We set%
\begin{equation}
C=\pi ^{N_{\text{E}}}P_{\text{v}}^{N_{\text{E}}}\left( \frac{\rho }{\kappa
(d)}\right) ^{-N_{\text{B}}}\text{.}
\end{equation}

\emph{1) }$\Pr \left\{ \mathbf{y}\in \mathcal{S}_{R}^{\prime }|\text{vol}(%
\mathcal{S}_{R}^{\prime })\leq C\right\} $\emph{:} Let $\mathcal{S}_{\text{C}%
}$ be a sphere centered at $\mathbf{GV}_{1}\mathbf{u}$, where vol$(\mathcal{S%
}_{\text{C}})=C$. Let $\mathcal{S}_{\text{C0}}$ be a sphere centered at the
origin, where vol$(\mathcal{S}_{\text{C0}})=C$. Recalling that Alice knows $%
\mathbf{Z}$ and $\mathbf{v}_{0}$. For $\mathbf{G}$, Alice knows its
statistics, but doesn't know its realization. Therefore, from Alice
perspective, $\mathbf{\tilde{n}}_{\text{v}}=\mathbf{G}\mathbf{Zv}_{0}$ has
i.i.d. $\mathcal{N}_{\mathbb{C}}(0$, $P_{\text{v}})$ components \cite%
{Lukacs54}.

Therefore, we have
\begin{eqnarray}
&&\Pr \left\{ \mathbf{y}\in \mathcal{S}_{R}^{\prime }|\text{vol}(\mathcal{S}%
_{R}^{\prime })\leq C\right\}  \notag \\
&\leq &\Pr \left\{ \mathbf{y}\in \mathcal{S}_{\text{C}}\right\}  \notag \\
&=&\int_{\mathcal{S}_{\text{C0}}}f(\mathbf{\tilde{n}}_{\text{v}})d\mathbf{%
\tilde{n}}_{\text{v}}  \notag \\
&\leq &\frac{C}{\pi ^{N_{\text{E}}}P_{\text{v}}^{N_{\text{E}}}}  \notag \\
&=&\left( \rho /\kappa (d)\right) ^{-N_{\text{B}}}\text{,}  \label{in2}
\end{eqnarray}%
where $f(\mathbf{\tilde{n}}_{\text{v}})$ is the probability density function
(pdf) of $\mathbf{\tilde{n}}_{\text{v}}$. The last inequality holds since%
\begin{eqnarray}
f(\mathbf{\tilde{n}}_{\text{v}}) &=&\frac{1}{\pi ^{N_{\text{E}}}P_{\text{v}%
}^{N_{\text{E}}}}\exp \left( -\frac{||\mathbf{\tilde{n}}_{\text{v}}||^{2}}{%
\tilde{\sigma}_{\text{v}}^{2}}\right)  \notag \\
&\leq &\frac{1}{\pi ^{N_{\text{E}}}P_{\text{v}}^{N_{\text{E}}}}\text{.}
\end{eqnarray}

\emph{2) }$\Pr \left\{ \text{vol}(\mathcal{S}_{R}^{\prime })>C\right\} $%
\emph{:} Since vol$(\mathcal{S}_{R}^{\prime })=d\cdot $vol$(\Lambda _{%
\mathbb{C}})$, we have%
\begin{equation}
\Pr \left\{ \text{vol}(\mathcal{S}_{R}^{\prime })>C\right\} =\Pr \left\{
\Delta (d)>\left( \rho /\kappa (d)\right) ^{-N_{\text{B}}}\right\} \text{.}
\label{in}
\end{equation}

From (\ref{Pd}), (\ref{in2}), (\ref{in}) and (\ref{Pr_inf}), as $N_{\text{B}%
}\rightarrow \infty $,%
\begin{equation}
F_{D}(d,P_{\text{v}})<O\left( \left( \frac{\rho }{\kappa (d)}\right) ^{-N_{%
\text{B}}}\right) \text{.}  \label{in1}
\end{equation}

From (\ref{Pd}), (\ref{in2}), (\ref{in}) and (\ref{Pr_Fin}), when $N_{\text{B%
}}$ is finite,%
\begin{equation}
F_{D}(d,P_{\text{v}})<\left( \frac{\rho }{\kappa (d)}\right) ^{-N_{\text{B}%
}}+\Upsilon \left( \frac{\rho }{\kappa (d)}\right) \text{.}
\end{equation}

\QEDA
\vspace{-5 mm}

\subsection{Proof of Theorem 2}

From (\ref{P_out}) and (\ref{uncer}), we have%
\begin{equation}
P_{\text{out}}(d)=F_{D}(d\text{, }P_{\text{v}})\text{.}
\end{equation}

Let $\rho =\varepsilon ^{-1/N_{\min }}\kappa (d)$, for arbitrarily small $%
\varepsilon >0$. We have%
\begin{equation}
\left( \rho /\kappa (d)\right) ^{-N_{\text{B}}}=\varepsilon ^{N_{\text{B}%
}/N_{\min }}\leq \varepsilon \text{.}  \label{E1}
\end{equation}

From Lemma \ref{lem1}, (\ref{E1}), and (\ref{NLP0}), if $P_{\text{v}}\geq
\rho ^{2}/\Phi ^{2N_{\text{B}}/N_{\text{E}}}$, we have%
\begin{equation}
F_{D}(d\text{, }P_{\text{v}})<\varepsilon +\Upsilon (\varepsilon
^{-1/N_{\min }})=O(\varepsilon )\text{,}
\end{equation}%
or equivalently,%
\begin{equation}
P_{\text{out}}(d)<O(\varepsilon )\text{.}
\end{equation}

\QEDA\vspace{-5mm}

\subsection{Proof of Lemma 4}

Recalling that%
\begin{equation}
R_{\max }(P_{\text{v}})=\max_{||\mathbf{v||}^{2}\leq P_{\text{v}}}\left\Vert
\mathbf{G}\mathbf{Z}\mathbf{v}\right\Vert \text{,}  \label{aabb}
\end{equation}%
\begin{equation}
r_{\text{eff}}(\Lambda _{\mathbb{C}})=\sqrt{N_{\text{B}}/(\pi e)}|\det ((%
\mathbf{GV}_{1}\mathbf{)}^{H}(\mathbf{GV}_{1}\mathbf{))|}^{\frac{1}{2N_{%
\text{B}}}}\text{.}
\end{equation}

From (\ref{R_max}), applying Cauchy--Schwarz inequality,%
\begin{equation}
R_{\max }^{2}(P_{\text{v}})=\lambda _{\max }P_{\text{v}}\leq P_{\text{v}%
}\left\Vert \mathbf{G}\mathbf{Z}\right\Vert _{F}^{2}\text{.}
\end{equation}

From Alice perspective, $\mathbf{G}\mathbf{Z}$ is a complex Gaussian random
matrix with i.i.d. components. Thus, $\left\Vert \mathbf{G}\mathbf{Z}%
\right\Vert _{F}^{2}$ can be expressed in terms of a Chi-squared random
variable:%
\begin{equation}
\left\Vert \mathbf{G}\mathbf{Z}\right\Vert _{F}^{2}=\frac{1}{2}\mathcal{X}%
^{2}\left( 2N_{\text{E}}\left( N_{\text{A}}-N_{\text{B}}\right) \right)
\text{.}  \label{RM1}
\end{equation}

According to (\ref{r1}), $r_{\text{eff}}(\Lambda _{\mathbb{C}})$ can be
expressed in terms of $N_{\text{B}}$ independent Chi-squared variables:%
\begin{equation}
r_{\text{eff}}(\Lambda _{\mathbb{C}})=\sqrt{N_{\text{B}}/(\pi e)}\left(
\prod_{j=1}^{N_{\text{B}}}\frac{1}{2}\mathcal{X}^{2}\left( 2(N_{\text{E}%
}-j+1)\right) \right) ^{\frac{1}{2N_{\text{B}}}}\text{.}  \label{RE2}
\end{equation}

Moreover, since $\mathbf{GV}_{1}$ and $\mathbf{G}\mathbf{Z}$ are mutually
independent \cite{Lukacs54}, $R_{\max }(P_{\text{v}})$ and $r_{\text{eff}%
}(\Lambda _{\mathbb{C}})$ are independent.

Then, we have%
\begin{align}
& \Pr \left\{ \dfrac{2R_{\max ,i}(P_{\text{v}})}{\sqrt{M}r_{\text{eff}%
,i}(\Lambda _{\mathbb{C}})}<x\right\}  \notag \\
& \geq \Pr \left\{ \frac{P_{\text{v}}\left\Vert \mathbf{G}\mathbf{Z}%
\right\Vert _{F}^{2}}{r_{\text{eff}}(\Lambda _{\mathbb{C}})^{2}}<\frac{x^{2}M%
}{4}\right\}  \notag \\
& =\Pr \left\{ \frac{\mathcal{X}^{2}\left( 2N_{\text{E}}\left( N_{\text{A}%
}-N_{\text{B}}\right) \right) }{\left( \prod_{j=1}^{N_{\text{B}}}\mathcal{X}%
^{2}\left( 2(N_{\text{E}}-j+1)\right) \right) ^{\frac{1}{N_{\text{B}}}}}<%
\frac{x^{2}MN_{\text{B}}}{4\pi eP_{\text{v}}}\right\}  \notag \\
& \overset{a}{\geq }\Pr \left\{ \dfrac{\mathcal{X}^{2}\left( 2N_{\text{E}%
}\left( N_{\text{A}}-N_{\text{B}}\right) \right) }{\dfrac{N_{\text{B}}}{%
\sum_{j=1}^{N_{\text{B}}}\dfrac{1}{\mathcal{X}^{2}\left( 2(N_{\text{E}%
}-j+1)\right) }}}<\dfrac{x^{2}MN_{\text{B}}}{4\pi eP_{\text{v}}}\right\}
\notag \\
& =\Pr \left\{ \sum_{j=1}^{N_{\text{B}}}\frac{\mathcal{X}^{2}\left( 2N_{%
\text{E}}\left( N_{\text{A}}-N_{\text{B}}\right) \right) }{\mathcal{X}%
^{2}\left( 2(N_{\text{E}}-j+1)\right) }<\frac{x^{2}MN_{\text{B}}^{2}}{4\pi
eP_{\text{v}}}\right\}  \notag \\
& \overset{b}{>}\prod\limits_{j=1}^{N_{\text{B}}}\Pr \left\{ \frac{\mathcal{X%
}^{2}\left( 2N_{\text{E}}\left( N_{\text{A}}-N_{\text{B}}\right) \right) }{%
\mathcal{X}^{2}\left( 2(N_{\text{E}}-j+1)\right) }\leq \frac{x^{2}MN_{\text{B%
}}}{4\pi eP_{\text{v}}}\right\}  \notag \\
& =\prod\limits_{j=1}^{N_{\text{B}}}\Pr \left\{ \mathcal{F}(2N_{\text{E}%
}\left( N_{\text{A}}-N_{\text{B}}\right) \text{, }2(N_{\text{E}}-j+1))\leq
g(x,j)\right\} \text{,}  \label{xxx}
\end{align}%
where $g(x,j)$ is given in (\ref{g_x}), and $\mathcal{F}(k_{1}$, $k_{2})$
represents an $\mathcal{F}$-distributed random variable with $k_{1}$ and $%
k_{2}$ degrees of freedom. ($a$) holds due to the inequality of geometric
and harmonic means. ($b$) holds by induction on the fact that if the
non-negative random variables $A_{i}$, $1\leq i\leq N$, are mutually
independent, given a constant $C>0$,%
\begin{align}
& \Pr \left\{ \sum_{i=1}^{N}A_{i}<C\right\} >\Pr \left\{ A_{1}\leq
C/N;\sum_{i=2}^{N}A_{i}\leq C(N-1)/N\right\}  \notag \\
& =\Pr \left\{ A_{1}\leq C/N\right\} \Pr \left\{ \sum_{i=2}^{N}A_{i}\leq
C(N-1)/N\right\} \text{.}
\end{align}

Since the cdf of $\mathcal{F}(k_{1}$, $k_{2})$ can be expressed using the
regularized incomplete beta function \cite{Imbeta}, the final expression of (%
\ref{xxx}) is given in (\ref{P_t}).

\QEDA\vspace{-5mm}

\subsection{Proof of Theorem 3}

From Alice perspective, $L_{i}$ is a function of $\mathbf{G}_{i}$. Since $%
\left\{ \mathbf{G}_{i}\right\} _{1}^{B}$ are mutually independent, $\left\{
L_{i}\right\} _{1}^{B}$ are mutually independent. From (\ref{PF_out}), we
have%
\begin{equation}
P_{\text{F,out}}(d,B)=\prod\limits_{i=1}^{B}\Pr \left\{ L_{i}<d\right\}
\text{.}  \label{122}
\end{equation}

We then evaluate $\Pr \left\{ L_{i}<d\right\} $. For simplicity, we remove
the index $i$. According to Theorem \ref{Th2}, with $P_{\text{v}%
}=\varepsilon ^{-2/N_{\min }}\kappa (d)^{2}/\Phi ^{2N_{\text{B}}/N_{\text{E}%
}}$, we have
\begin{equation}
\Pr (D<d)<O(\varepsilon )\text{.}
\end{equation}

We can upper bound $\Pr \left\{ L<d\right\} $ by%
\begin{align}
& \Pr \left\{ L<d\right\}  \notag \\
& =\Pr \{L<d|D\geq d\}\Pr \{D\geq d\}+\Pr \{L<d|D<d\}\Pr \{D<d\}  \notag \\
& \leq \Pr \left\{ L<D|D\geq d\right\} \Pr \{D\geq d\}+O(\varepsilon )
\notag \\
& \leq \Pr \left\{ L<D\right\} +O(\varepsilon )\text{.}  \label{d_D}
\end{align}

We then evaluate $\Pr \left\{ L<D\right\} $.%
\begin{align}
& \Pr \left\{ L<D\right\} =\Pr \{L<D|\Theta (P_{\text{v}})<\varepsilon \}\Pr
\{\Theta (P_{\text{v}})<\varepsilon \}  \notag \\
& +\Pr \{L<D|\Theta (P_{\text{v}})\geq \varepsilon \}\Pr \{\Theta (P_{\text{v%
}})\geq \varepsilon \}  \notag \\
& \leq \Pr \{L<D|\Theta (P_{\text{v}})<\varepsilon \}+\Pr \{\Theta (P_{\text{%
v}})\geq \varepsilon \}\text{,}  \label{L_E}
\end{align}%
where $\Theta (P_{\text{v}})$ is given in (\ref{RRR}).

We then evaluate the two terms in (\ref{L_E}), separately.

\emph{1)} $\Pr \left\{ L<D|\Theta (P_{\text{v}})<\varepsilon \right\} $\emph{%
:} Recalling that%
\begin{equation}
\mathbf{y}=\mathbf{GV}_{1}\mathbf{{{\mathbf{u+GZv}}}}\text{,}
\end{equation}%
\begin{equation}
\Lambda _{\text{F}}=\{\mathbf{G\mathbf{\mathbf{\mathbf{\mathbf{V}}}}}_{1}%
\mathbf{\mathbf{{{\mathbf{u}}}}},{\mathbf{{\mathbf{u}}}}\in \mathcal{\tilde{Q%
}}{^{N_{\text{B}}}\}}\text{.}
\end{equation}

Since $L=|S_{R_{\max }}\cap \Lambda _{\text{F}}|$, we begin by checking the
boundary of $\Lambda _{\text{F}}$. Let $\mathbf{O}$ be the center point of $%
\Lambda _{\text{F}}$. According to \cite{Ajtai02}, for the Gaussian random
lattice basis $\mathbf{G\mathbf{\mathbf{\mathbf{\mathbf{V}}}}}_{1}$, the
boundary of $\Lambda _{\text{F}}$ can be approximated by a sphere $S_{\text{%
F,S}}$ centered at $\mathbf{O}$ with radius $\sqrt{M}r_{\text{eff}}(\Lambda
_{\mathbb{C}})$, where $r_{\text{eff}}(\Lambda _{\mathbb{C}})$ is given in (%
\ref{r_eff}).

Given $\Theta (P_{\text{v}})<\varepsilon $ and $\varepsilon <1$, we have $%
\sqrt{M}r_{\text{eff}}(\Lambda _{\mathbb{C}})>2R_{\max }(P_{\text{v}})$. We
define a concentric sphere $S_{\text{F,C}}$ with radius $\sqrt{M}r_{\text{eff%
}}(\Lambda _{\mathbb{C}})-2R_{\max }(P_{\text{v}})$, where $R_{\max }(P_{%
\text{v}})$ is given in (\ref{R_max}). We then check when $L=D$ given $%
\Theta (P_{\text{v}})<\varepsilon $.

If $\mathbf{GV}_{1}\mathbf{{{\mathbf{u}}}}\in S_{\text{F,C}}$, using
triangle inequality, we have%
\begin{eqnarray}
||\mathbf{y-O||} &\leq &\left\Vert \mathbf{G\mathbf{\mathbf{V}}}_{1}\mathbf{{%
{\mathbf{u}}}}-\mathbf{O}\right\Vert +\left\Vert \mathbf{{{\mathbf{GZv}}}}%
\right\Vert  \notag \\
&\leq &\sqrt{M}r_{\text{eff}}(\Lambda _{\mathbb{C}})-R_{\max }(P_{\text{v}})%
\text{.}  \label{b1}
\end{eqnarray}%
We then check the locations of the $D$ elements in $S_{R_{\max }}\cap
\Lambda _{\mathbb{C}}$ (\ref{Di}), denoted by, $\mathbf{G\mathbf{\mathbf{V}}}%
_{1}\mathbf{u}_{t}^{\prime }$, $1\leq t\leq D$. Note that%
\begin{equation}
\left\Vert \mathbf{G\mathbf{\mathbf{V}}}_{1}\mathbf{u}_{t}^{\prime }-\mathbf{%
{{\mathbf{y}}}}\right\Vert \leq R_{\max }(P_{\text{v}})\text{.}  \label{b2}
\end{equation}%
From (\ref{b1}) and (\ref{b2}), using triangle inequality, for all $t$,%
\begin{equation}
\left\Vert \mathbf{G\mathbf{V}}_{1}\mathbf{u}_{t}^{\prime }-\mathbf{O}%
\right\Vert \leq \left\Vert \mathbf{y-O}\right\Vert +\left\Vert \mathbf{G%
\mathbf{\mathbf{V}}}_{1}\mathbf{u}_{t}^{\prime }-\mathbf{{{\mathbf{y}}}}%
\right\Vert \leq \sqrt{M}r_{\text{eff}}(\Lambda _{\mathbb{C}})\text{.}
\end{equation}%
Therefore, $S_{R_{\max }}\cap \Lambda _{\mathbb{C}}\subset \Lambda _{\text{F}%
}$, i.e., $L=D$.

If $\mathbf{G\mathbf{\mathbf{V}}}_{1}\mathbf{{{\mathbf{u}}}}\notin S_{\text{%
F,C}}$, there is a probability that $L<D$. Therefore, we have%
\begin{equation}
\Pr \left\{ L<D|\Theta (P_{\text{v}})<\varepsilon \right\} <\Pr \left\{
\mathbf{GV}_{1}\mathbf{{{\mathbf{u}}}}\notin S_{\text{F,C}}\right\} \text{.}
\label{EEE1}
\end{equation}%
Since $\mathbf{G\mathbf{\mathbf{V}}}_{1}\mathbf{{{\mathbf{u}}}}$ is
uniformly distributed over $S_{\text{F,S}}$, we have%
\begin{equation}
\Pr \left\{ \mathbf{GV}_{1}\mathbf{{{\mathbf{u}}}}\in S_{\text{F,C}}\right\}
=\frac{\text{vol}(S_{\text{F,C}})}{\text{vol}(S_{\text{F,S}})}=\left(
1-\Theta (P_{\text{v}})\right) ^{2N_{\text{B}}}>\left( 1-\varepsilon \right)
^{2N_{\text{B}}}  \label{EEE2}
\end{equation}

Based on (\ref{EEE1}) and (\ref{EEE2}), we have%
\begin{equation}
\Pr \left\{ L<D|\Theta (P_{\text{v}})<\varepsilon \right\} <1-\left(
1-\varepsilon \right) ^{2N_{\text{B}}}=O(\varepsilon )\text{.}  \label{pv1}
\end{equation}

\emph{2)} $\Pr \{\Theta (P_{\text{v}})\geq \varepsilon \}$\emph{:} Using
Lemma \ref{Th4} with $M\geq \varepsilon ^{-3-2/N_{\min }}\kappa (d)^{2}$, we
have%
\begin{align}
\Pr \left\{ \Theta (P_{\text{v}})<\varepsilon \right\} & \geq
\prod\limits_{j=1}^{N_{\text{B}}}B_{a,b(j)}\left( 1-\frac{b(j)}{%
ag(\varepsilon ,j)+b(j)}\right)  \notag \\
& \overset{a}{=}\prod\limits_{j=1}^{N_{\text{B}}}1-B_{b(j),a}\left( \frac{%
b(j)}{ag(\varepsilon ,j)+b(j)}\right)  \notag \\
& \overset{b}{=}\prod\limits_{j=1}^{N_{\text{B}}}\left( 1-O(\varepsilon ^{N_{%
\text{E}}-j+1})\right)  \notag \\
& >\left( 1-O(\varepsilon ^{N})\right) ^{N_{\text{B}}}\text{,}
\end{align}%
where $N=N_{\text{E}}-N_{\text{B}}+1$ and%
\begin{equation}
a=N_{\text{E}}(N_{\text{A}}-N_{\text{B}})\text{ and }b(j)=N_{\text{E}}-j+1%
\text{.}
\end{equation}%
$(a)$ and $(b)$ hold due to the facts that%
\begin{equation}
B_{a,b}(x)=1-B_{b,a}(1-x)\text{,}
\end{equation}%
\begin{equation}
B_{b(j),a}(x)=O(x^{b(j)})\text{, for }x\rightarrow 0\text{.}
\end{equation}

Consequently, we have%
\begin{equation}
\Pr \{\Theta (P_{\text{v}})\geq \varepsilon \}<1-\left( 1-O(\varepsilon
^{N})\right) ^{N_{\text{B}}}=O(\varepsilon ^{N})\text{.}  \label{pv2}
\end{equation}

By substituting (\ref{L_E}), (\ref{pv1}) and (\ref{pv2}) to (\ref{d_D}), we
have%
\begin{equation}
\Pr \left\{ L<d\right\} <O(\varepsilon )\text{.}  \label{tt1}
\end{equation}

From (\ref{122}) and (\ref{tt1}), if $M\geq \varepsilon ^{-3-2/N_{\min
}}\kappa (d)^{2}$ and $P_{\text{v}}=\varepsilon ^{-2/N_{\min }}\kappa
(d)^{2}/\Phi ^{2N_{\text{B}}/N_{\text{E}}}$, we have%
\begin{equation}
P_{\text{F,out}}(d,B)<O(\varepsilon ^{B})\text{.}
\end{equation}

\QEDA\vspace{-5mm}
\bibliographystyle{IEEEtran}
\bibliography{IEEEabrv,LIUBIB}

\end{document}